\newtheorem{theorem}{Theorem}[section]
\newtheorem{claim}[theorem]{Claim}
\newtheorem{fact}[theorem]{Fact}
\newtheorem{corollary}[theorem]{Corollary}
\newtheorem{lemma}[theorem]{Lemma}
\theoremstyle{definition}
\newtheorem{definition}[theorem]{Definition}
\newtheorem{remark}[theorem]{Remark}
\newcommand{\chase}{\textsc{PC}}
\newcommand{\pt}{\textsf{pt}}
\newcommand{\gammavalue}{1 - \frac{0.1}{\log n}}
\newcommand{\size}[1]{\left| #1 \right|}
\newcommand{\set}[1]{\left\{#1\right\}}
\renewcommand{\mid}{\;\middle\vert\;}
\newcommand{\ol}{\overline}
\newcommand{\wt}{\widetilde}
\newcommand{\eqdef}{\stackrel{\mathsf{def}}{=}}
\newcommand{\rv}[1]{\boldsymbol{#1}}
\newcommand{\zo}{\set{0,1}}
\newcommand{\ee}{\mathcal{E}}
\newcommand{\fFam}{\mathcal{F}}
\newcommand{\supp}{\mathrm{supp}}
\newcommand{\msf}{\mathsf}
\newcommand{\true}{\textsc{True}}
\newcommand{\false}{\textsc{False}}
\newcommand{\one}{\mathbb{1}}
\newcommand{\defi}{\mathbf{D}_\infty}
\newcommand{\ent}{\mathbf{H}}
\renewcommand{\Pr}{\operatorname*{\mathbf{Pr}}}
\DeclareMathOperator*{\E}{\mathbf{E}}
\newcommand{\pr}[2][]{ \ifthenelse{\isempty{#1}}
  {\Pr\left[#2\right]} {\Pr_{#1}\left[#2\right]} }
\newcommand{\ex}[2][]{ \ifthenelse{\isempty{#1}}
  {\E\left[#2\right]}
  {\E_{#1}\left[#2\right]} }
\newcommand{\CC}{\mathrm{CC}}
\newcommand{\ie} {i.e.,\ }
\newcommand{\delete}[1]{}
\title{Gadgetless Lifting Beats Round Elimination: Improved Lower Bounds for Pointer Chasing} 
\begin{document}

\author{Xinyu Mao \thanks{Research supported by NSF CAREER award 2141536.\\ Thomas Lord Department of Computer Science, University of Southern California.  \\ Email: \texttt{\{xinyumao, guangxuy, jiapengz\}@usc.edu}}
\\
\and
Guangxu Yang \footnotemark[1]
\\
\and
Jiapeng Zhang \footnotemark[1]
}

\maketitle

\begin{abstract}

We prove an $\Omega(n / k + k)$ communication lower bound on \textit{$(k - 1)$-round distributional complexity} of \textit{the $k$-step pointer chasing problem} under \textit{uniform input distribution}, improving the $\Omega(n/k - k\log n)$ lower bound due to Yehudayoff (Combinatorics Probability and Computing, 2020).
Our lower bound almost matches the upper bound of $\wt{O}(n/k + k)$ communication by Nisan and Wigderson (STOC 91). 

As part of our approach, we put forth \textit{gadgetless lifting}, a new framework that lifts lower bounds for \textit{a family of restricted protocols} into lower bounds for \textit{general protocols}.  
A key step in gadgetless lifting is choosing the appropriate definition of restricted protocols. 
In this paper, our definition of restricted protocols is inspired by the structure-vs-pseudorandomness decomposition by G{\"o}{\"o}s, Pitassi, and Watson (FOCS 17) and Yang and Zhang (STOC 24).

Previously, round-communication trade-offs were mainly obtained by round elimination and information complexity. 
Both methods have some barriers in some situations, and we believe gadgetless lifting could potentially address these barriers.
\end{abstract}

\section{Introduction}
\label{sec:intro}

Pointer chasing is a well-known problem \cite{rao2020communication} that demonstrates the power of interaction in communication and has broad applications in different areas.
It was used for proving monotone constant-depth hierarchy theorem \cite{nisan1991rounds,klawe1984monotone}, lower bounds on the time complexity of distributed computation \cite{nanongkai2011tight}, lower bounds on the space complexity of streaming algorithms \cite{feigenbaum2009graph,guruswami2016superlinear,assadi2019polynomial}, adaptivity hierarchy theorem for property testing \cite{canonne2018adaptivity}, exponential separations in local differential privacy \cite{joseph2020exponential}, memory bounds for continual learning \cite{chen2022memory} and limitations of the transformer architecture \cite{peng2024limitations}. 
It is a two-party function defined below.

\begin{definition}[$k$-step pointer chasing function]
For $k\geq 1$, the $k$-step pointer chasing function $\chase_k: [n]^n \times [n]^n \to \zo$ is defined as follows. 
Given input $f_A,f_B \in [n]^n$, for $r = 0, 1, \dots, k$ we recursively define pointers via
$$
    \pt_r(f_A, f_B) \eqdef \begin{cases}
        1 &\text{if $r = 0$}; \\
        f_A(\pt_{r - 1}(f_A, f_B)) &\text{if $r > 0$ is odd}; \\
        f_B(\pt_{r - 1}(f_A, f_B)) &\text{if $r > 0$ is even}.
    \end{cases}
$$
The output of $\chase_k$ is the parity of the last pointer, namely, $\chase_k(f_A, f_B) \eqdef \pt_k(f_A, f_B) \bmod 2$.
\end{definition}


\paragraph{Upper bounds.}
If Alice and Bob could communicate for $k$ rounds, a simple protocol is the following: Alice and Bob alternatively send $f_A(\pt_{r - 1}(f_A, f_B))$ or $f_A(\pt_{r - 1}(f_A, f_B))$. The total communication cost for this simple protocol is $O(k\cdot\log n)$. However, if Alice and Bob can only communicate $(k - 1)$ rounds, the upper bound then becomes non-trivial.
Nisan and Wigderson \cite{nisan1991rounds} proposed a randomized $(k - 1)$-round protocol with $O((n/k + k)\log n)$ communication bits.

\begin{itemize}
    \item In the beginning, Alice and Bob use public randomness to pick a set of coordinates $I\subseteq[n]$ of size  $10n/k$, and then send $f_A(I)$ and $f_B(I)$ to the other party.
    \item On the other hand, Alice and Bob also simulate ($r$ rounds) deterministic protocol but skip one round if one party finds that the pointer is located in $I$. 
    \item If the skip round never happens, Alice and Bob simply abort at the last round. A simple calculation shows the probability of this event is low.
\end{itemize}

This randomized protocol is indeed very simple. Alice and Bob only share coordinate-wise information. In fact, this is a structured rectangle in our setting.

\paragraph{Lower Bounds.}
Consider $(k - 1)$ round protocols where Alice speaks first.
For deterministic protocols, 
Nisan and Wigderson \cite{nisan1991rounds} proved an $\Omega(n - k\log n)$ communication lower bound. 
In the same paper, they also proved an $\Omega(n / k^2 - k\log n)$ communication lower bound for protocols that achieve $2/3$ accuracy \textit{under uniform input distribution}. 

Since then, lower bounds for pointer chasing and its close variants have been substantially studied by a good amount of papers \cite{duris1984lower,damm1998some,ponzio2001communication,klauck2000quantum,klauck2007interaction,feigenbaum2009graph,guruswami2016superlinear,assadi2019polynomial}. Finally, Yehudayoff \cite{yehudayoff2020pointer} proved an $\Omega(n / k - k\log n)$ lower bound for protocols achieving constant advantage under uniform input distribution. 

Now the main gap between the upper bound \cite{nisan1991rounds} and the lower bound \cite{yehudayoff2020pointer} is the extra $k\log n$ term. This gap becomes significant if $k\geq \sqrt{n}$. In this paper, we further improve the lower bound and close the gap.

\subsection{Our results}
We prove that any protocol that achieves constant advantage under uniform input distribution must communicate $\Omega(n/k + k)$ bits. 
\begin{theorem} \label{thm:main:distributional}
    Let $\Pi$ be a $(k - 1)$-round deterministic protocol for $\chase_k$ where Alice speaks first such that 
    $$
        \pr[f_A, f_B \gets {[n]^n}]{\Pi(f_A, f_B) = \chase_k(f_A, f_B)} \geq 2/3.
    $$
    Then the communication complexity of $\Pi$ is $\Omega(n / k + k)$.
\end{theorem}

By Yao's minimax principle, it implies a lower bound for the $(k - 1)$ round randomized communication complexity.

\begin{corollary}
\label{thm:main}
Every $(k - 1)$-round randomized protocol for $\chase_k$ with error at most $1/3$ (where Alice speaks first) has communication complexity $\Omega(n/k + k)$.
\end{corollary}

We observe there is still a $(\log n)$ gap between our lower bound and the protocol by \cite{nisan1991rounds}. We conjecture that our lower bound is tight and there is a chance to remove the $\log n$ factor in the upper bound side. A simple deterministic protocol with $(k - 1)$ rounds and $O\left(n \right)$ communication bits could be the following: Alice and Bob send the parity of $f_A(x)$ and $f_B(x)$ for all $x\in[n]$ in the beginning. Hence they can skip the last round as they already know the parity. This simple protocol shows that \cite{nisan1991rounds}'s protocol is not tight when $k=o(\log n).$ We believe similar ideas could be extended for large $k$.

\paragraph{Applications.} 
Given the connections between $\chase_k$ and diverse applications \cite{feigenbaum2009graph,nanongkai2011tight,canonne2018adaptivity,joseph2020exponential,chen2022memory,peng2024limitations}, our improved lower bounds automatically lead to several applications. We list two applications below.

\begin{corollary}[Direct sum extension of pointer chasing]
The $(k-1)$-round randomized communication complexity of $\chase_k$ with $d$ pairs of functions is $\Omega(d\cdot n/k^2 + d)$
\end{corollary}
\noindent This corollary improves the previous $\Omega(d\cdot n/k^3 - dk\log n -2d)$ lower bound presented in \cite{feigenbaum2009graph}, which has applications in BFS trees streaming lower bound.

\begin{corollary}[Exponential separations in local differential privacy]
Let $A$ be a $(k-1)$-round sequentially interactive $\varepsilon$-locally private protocol solving $\chase_k$ with error probability $\gamma \leq 1/3$ . Then the sample complexity of $A$ is $\Omega\left(\frac{1}{e^{\varepsilon}} \cdot (n/k + k)\right)$ and there is a $k$ round protocol with sample complexity $\widetilde{O}\left(\frac{k\log n}{\varepsilon^2}\right)$.
\end{corollary}
\noindent This corollary improves the previous $\Omega\left(\frac{n}{e^{\varepsilon} k^2}\right)$ lower bound for $k<\sqrt{n / \log n}$ given 
by \cite{joseph2020exponential}.

\subsection{Gadgetless Lifting: A New Framework to Prove Communication Lower Bounds}
The following two-step approach for proving communication lower bounds often appears in previous works (e.g., \cite{goldmann1992simple, raz1997separation}):
\begin{enumerate}
    \item Identify a family of structured protocols.
    \item Simulate general protocols by structured protocols and prove communication lower bounds for structured protocols.
\end{enumerate}
This approach culminates in query-to-communication lifting theorems \cite{goos2015deterministic, BPPLifting,chattopadhyay2019query,lovett2022lifting}.

\paragraph{Query-to-communication lifting theorems.}
Let $f: Z^n \to \zo$ be a function, and let $g: X \times Y \to Z$ be a two-party gadget function. 
The goal is to prove communication lower bounds for the function $f \circ g^n: X^n \times Y^n \to \zo$. 
Indeed, all functions for which lower bounds are proven using the above approach can be written as $f \circ g^n$ for appropriate $f$ and $g$.
For such functions, a communication protocol can always simulate a decision tree that computes $f$ --- such protocols consist of a natural family of structured protocols. 
Communication complexity for such protocols is essentially the query 
complexity of $f$, for which lower bounds are often easy to prove. 
Hence, the primary job is to show how to simulate general protocols by structured ones. 

Though query-to-communication lifting is a beautiful framework, it requires a gadget function $g$ since $f$ is a one-party function. 
As a consequence, this framework only applies to \textit{lifted functions}, namely, functions that can be written as $f \circ g^n$.
Many important problems, such as pointer chasing, do not fall into this category; hence, lifting theorems do not apply in those cases. 

To address this limitation, we propose a new framework called \textit{gadgetless lifting}. We take a step back to the original approach, reconsidering the choice of structured protocols. 
In some cases, although the function is not a lifted function, 
there are simple and natural protocols.
The crux of gadgetless lifting is how to decide the structured protocols. 
In this paper, we capture it as those protocols that ``all shared useful information are local information''. 
For example, the protocol by \cite{nisan1991rounds} only share local information such as $f_A(x)$ or $f_B(x)$ for some $x\in[n]$. 
In lemma \ref{lem:invariant}, we show that any protocol for $\textsc{PC}_k$ can be simulated by such protocols.
Our proof is inspired by the structure-vs-pseudorandomness decomposition by G{\"o}{\"o}s, Pitassi, and Watson \cite{BPPLifting} and Yang and Zhang \cite{YZ24}, which is a powerful tool that emerged in the study of query-to-communication lifting theorems.
Therefore, we call our method `gadgetless lifting'.

In the study of lifted functions, it has been shown that query-to-communication lifting theorems bypassed some fundamental barriers from previous methods. 
Similarly, gadgetless lifting can also bypass obstacles from existing methods. We discuss two of them below.

\paragraph{Avoiding the loss in round elimination method.}
Previously, the only method to prove round-communication trade-offs is the \textit{round elimination method} \cite{nisan1991rounds}. 
In \cite{nisan1991rounds} and \cite{yehudayoff2020pointer}, the authors studied the pointer chasing problem via the round elimination method.
Denote by $\rv{M}_1, \ldots, \rv{M}_t$ the messages sent in the first $t$ rounds, and let $\rv{Z}_i$ be the pointer in the $i$-th round, \ie, $Z_i = \msf{pt}_{i}(X , Y)$ where $X , Y$ are uniformly chosen from $[n]^n$. 
As is standard the round elimination method, \cite{nisan1991rounds,yehudayoff2020pointer} analyzed the random variables
$$
\rv{R}_t= (\rv{M}_1,\ldots,\rv{M}_t,\rv{Z}_1,\ldots,\rv{Z}_{t-1}) \text{  for $t \leq k$}.
$$
They proved that $\ent(\rv{R}_k) \geq \Omega(n/k)$. Together with the fact that $\ent(\rv{Z}_1, \dots, \rv{Z}_k)=k\log n$, it implies that $\ent(\rv{M}) \geq  \Omega(n/k - k\log n)$. 
The $(k\log n)$ loss (or something similar) appears in many previous works that adopt round elimination-based \cite{nisan1991rounds,klauck2000quantum,klauck2007interaction,guha2007lower,feigenbaum2009graph,yehudayoff2020pointer}. 
In this paper, we avoid the $k\log n$ loss via the gadgetless lifting.

\paragraph{Breaking square-root loss barrier in information complexity.}
Another popular method in proving communication lower bounds is by way of information complexity. However, as mentioned by Yahudayoff \cite{yehudayoff2020pointer}, entropy-based analyses are likely to induce a square-root loss barrier. This barrier usually comes from applying Pinsker’s inequality (or its variant) to bound statistical distance from a small entropy gap. As a consequence, many results such as \cite{nisan1991rounds} can only prove an $\Omega(n/k^2-k\log n)$ lower bound.

As mentioned in \cite{yehudayoff2020pointer}, the square-root loss also appears in many works when using the entropy-based method to prove lower bounds.
For example, it appears in the parallel repetition theorem and is related to the `strong parallel repetition' conjecture which is motivated by Khot’s unique games conjecture \cite{khot2002power}. 
This loss also appears in direct-sum theorems \cite{barak2010compress} and direct-product theorems \cite{braverman2013direct} in communication complexity.

\cite{yehudayoff2020pointer} overcomes this square-root loss barrier by using a non-standard measurement called triangular discrimination. 
By contrast, our approach overcomes the barrier more naturally without using entropy. 

\paragraph{Potentail applications.} 
We noticed that our method can also be naturally extended to multiparty settings such as the numbers in hand model. 
Moreover, some important open problems, such as round-communication tradeoff of bipartite matching problem \cite{blikstad2022nearly} and set pointer chasing problem \cite{feigenbaum2009graph,guruswami2016superlinear}, are difficult to solve using the round elimination method due to its inherent limitations.  Our method offers the potential to solve these challenging problems. 
\section{Preliminaries}

\paragraph{Notations.}
We use capital letters $X$ to denote a set and use bold symbols like $\rv{R}$ to denote random variables. Particularly, for a set $X$, we use $\rv{X}$ to denote the random variable uniformly distributed over the set $X$.  We use $\gets$ to denote sampling from a distribution or choosing an element from a set uniformly at random.

\subsection{Density-Restoring Partition}

\paragraph{Min-entropy and dense distribution.}

For a random variable $\rv{X}$, we use $\supp(\rv{X})$ to denote the support of $\rv{X}$. 

\begin{definition}[Min-entropy and deficiency]
The min-entropy of a random variable $\rv{X}$ is defined by
\[
\ent_\infty(\rv{X}):=\min_{x \in \supp(\rv{X})} \log\left(\frac{1}{\Pr[\rv{X}=x]}\right).
\]
Suppose that $\rv{X}$ is supported on $[n]^J$.
We define the \textit{deficiency} of $\rv{X}$ as 
$$
    \defi(\rv{X}):= \size{J} \log n  - \ent_\infty(\rv{X}).
$$
\end{definition}

For $I \subseteq J$, $x \in [n]^{J}$, let $x(I) \eqdef (x(i))_{i \in I} \in [n]^I$ be the projection of $x$ on coordinates in $I$.

\begin{definition}[Dense distribution]
Let $\gamma \in (0, 1)$.
A random variable $\rv{X}$ supported on $[n]^J$ is said to be \textit{$\gamma$-dense} if for all nonempty $I \subseteq J$, $\ent_\infty(x(I))\geq\gamma |I|\log n$.
\end{definition}


The following lemma is the crux of the structure-vs-pseudorandomness method by \cite{BPPLifting}. It essentially says that a flat random variable could be decomposed into a convex combination of flat random variables with disjoint support and dense properties.

\begin{lemma}[Density-restoring partition] 
\label{lemma:density:restoring:partition}
Let $\gamma \in (0, 1)$.
Let $X$ be a subset of $[n]^M$ and $J \subseteq [M]$.
Suppose that there exists an $\beta \in [n]^{\ol{J}}$ such that $\forall x\in X, x(\ol{J})=\beta$. 
Then, there exists a partition 
$X = X^1\cup X^2\cup\cdots\cup X^r$
and every $X^i$ is associated with a set $I_i \subseteq J$ and a value $\alpha_i \in [n]^{I_i}$ that satisfy the following properties.
\begin{enumerate}
    \item $\forall x\in X^i,x(I_i)= \alpha_i$;
    \item $\rv{X}^i(J\setminus I_i)$ is $\gamma$-dense;
    \item $\defi\left(\rv{X}^i(J \setminus I_i)\right)\leq \defi\left(\rv{X}(J)\right)- (1 - \gamma)\log n \cdot |I_i| + \delta_i$, where $\delta_i \eqdef \log(|X|/|\cup_{j\geq i}X^j|)$. 
\end{enumerate} 
\end{lemma}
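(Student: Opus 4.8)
The plan is to build the partition greedily. I would iterate: at each step $i$, given the current ``remaining'' set $Y_i \subseteq X$ (with $Y_1 = X$), I look for a maximal ``min-entropy violation'' inside $Y_i$. Concretely, I search for a nonempty set $I \subseteq J$ and a value $\alpha \in [n]^I$ such that $\Pr_{x \gets Y_i}[x(I) = \alpha]$ is large — larger than $n^{-\gamma|I|}$ — and among all such pairs I pick one maximizing $|I|$ (breaking ties arbitrarily). I then set $I_i := I$, $\alpha_i := \alpha$, and $X^i := \{x \in Y_i : x(I_i) = \alpha_i\}$, and put $Y_{i+1} := Y_i \setminus X^i$. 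If no such violating pair exists, then $\rv{Y_i}(J)$ is already $\gamma$-dense, so I take $I_i := \emptyset$, $\alpha_i := $ the empty tuple, $X^i := Y_i$, and terminate. Property 1 holds by construction.

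For Property 2, I need $\rv{X}^i(J \setminus I_i)$ to be $\gamma$-dense. This is where maximality of $|I_i|$ is used: suppose for contradiction some nonempty $I' \subseteq J \setminus I_i$ and value $\alpha'$ witness a density violation for $\rv{X}^i(J \setminus I_i)$, i.e. $\Pr_{x \gets X^i}[x(I') = \alpha'] > n^{-\gamma|I'|}$. Since $X^i$ is exactly the $x(I_i) = \alpha_i$ slice of $Y_i$, the pair $(I_i \cup I', \alpha_i \cup \alpha')$ would then satisfy $\Pr_{x \gets Y_i}[x(I_i \cup I') = \alpha_i \cup \alpha'] = \Pr_{x \gets Y_i}[x(I_i)=\alpha_i] \cdot \Pr_{x\gets X^i}[x(I')=\alpha'] > n^{-\gamma|I_i|}\cdot n^{-\gamma|I'|} = n^{-\gamma|I_i \cup I'|}$, contradicting the maximality of $|I_i|$ (the chosen pair at step $i$ had the largest $|I|$ among all violating pairs, and $|I_i \cup I'| > |I_i|$). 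One subtlety: at the final step $I_i = \emptyset$, and density of $\rv{Y_i}(J) = \rv{X}^i(J \setminus I_i)$ holds directly because the loop terminated. I should also double check the edge case where the first violating pair already has $I_1 = J$; then $J \setminus I_1 = \emptyset$ and Property 2 is vacuous.

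For Property 3, I track min-entropy. Write $N_i := |Y_i| = |\cup_{j \ge i} X^j|$, so $\delta_i = \log(|X|/N_i)$ and $\delta_1 = 0$. Since all of $X$ (hence $Y_i$) agrees with $\beta$ outside $J$, I have $\ent_\infty(\rv{X}(J)) = \log|X|$ and $\ent_\infty(\rv{Y_i}(J)) = \log N_i$, so $\defi(\rv{Y_i}(J)) = |J|\log n - \log N_i = \defi(\rv{X}(J)) + \delta_i$. Now $|X^i| = \Pr_{x\gets Y_i}[x(I_i)=\alpha_i]\cdot N_i \ge n^{-\gamma|I_i|}\cdot N_i$ (this bound also holds trivially when $I_i = \emptyset$). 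Because $X^i$ is the $\alpha_i$-slice, $\rv{X}^i(J\setminus I_i)$ is flat on $|X^i|$ points, so $\ent_\infty(\rv{X}^i(J\setminus I_i)) = \log|X^i| \ge -\gamma|I_i|\log n + \log N_i$. Therefore $\defi(\rv{X}^i(J\setminus I_i)) = (|J| - |I_i|)\log n - \log|X^i| \le (|J|-|I_i|)\log n + \gamma|I_i|\log n - \log N_i = \big(|J|\log n - \log N_i\big) - (1-\gamma)\log n\cdot|I_i| = \defi(\rv{X}(J)) - (1-\gamma)\log n\cdot|I_i| + \delta_i$, as required.

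The main obstacle I anticipate is not any single step but making sure the bookkeeping around flatness is airtight: the statement's hypothesis forces $\rv{X}$ to be flat (uniform on $X$), and I am implicitly using that each $\rv{Y_i}$ and each $\rv{X}^i(J\setminus I_i)$ is again flat so that ``$\ent_\infty = \log(\text{support size})$'' holds exactly — that is what lets the deficiency accounting telescope cleanly. I would state this flatness preservation as the first observation. A secondary point to handle carefully is that the process terminates: each $X^i$ is nonempty (a violating pair has probability mass $> n^{-\gamma|I_i|} > 0$, and the terminal step takes the whole remainder), so $N_i$ is strictly decreasing and the partition is finite. Everything else is the routine manipulation of logs sketched above.
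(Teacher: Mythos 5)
Your proposal is correct and follows essentially the same route as the paper: a greedy peeling of maximal min-entropy violations, the same union-of-witnesses contradiction for $\gamma$-density (the paper uses inclusion-maximality where you use maximum cardinality, which is interchangeable here), and the same deficiency bookkeeping via $|X^i| \geq n^{-\gamma|I_i|}\cdot|{\cup_{j\geq i}X^j}|$. Your explicit remarks on flatness preservation and termination are details the paper leaves implicit, but they do not constitute a different argument.
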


\noindent The proof of this lemma, simple and elegant, is included in the appendix for completeness.

\subsection{Communication Protocols}
We recall basic definitions and facts about communication protocols.

\paragraph{Protocol Tree.}
Let $X$ and $Y$ be the input space of Alice and Bob respectively. 
A deterministic communication protocol $\Pi$ is specified by a rooted binary tree.
For every internal vertex $v$, 
\begin{itemize}
    \item it has 2 children, denoted by $\Pi(v, 0)$ and $\Pi(v, 1)$;
    \item $v$ is owned by either Alice or Bob --- we denote the owner by $\msf{owner}(v)$;
    \item every leaf node specifies an output.
\end{itemize}
Starting from the root, 
the owner of the current node $\msf{cur}$ partitions its input space into two parts $X_0$ and $X_1$, and sets the current node to $\Pi(\msf{cur}, b)$ if its input belongs to $X_b$.

\begin{fact}
    The set of all inputs that leads to an internal vertex $v$ is a rectangle, denoted by $\Pi_v = X_v \times Y_v \subseteq X \times Y$.
\end{fact}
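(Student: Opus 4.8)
The plan is to prove this by induction on the depth of the vertex $v$ in the protocol tree, using the fact that ``being a combinatorial rectangle'' is preserved under the restriction imposed by one party's message bit, since at each node only the owner's own input is inspected.

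For the base case, take $v$ to be the root: every input $(x,y) \in X \times Y$ leads to the root, so $\Pi_{\msf{root}} = X \times Y = X_{\msf{root}} \times Y_{\msf{root}}$ is a rectangle. For the inductive step, let $v$ be an internal vertex with $\Pi_v = X_v \times Y_v$, let $b \in \set{0,1}$, and consider the child $u = \Pi(v,b)$. Assume without loss of generality that $\msf{owner}(v) = \alice$ (the case $\msf{owner}(v) = \bob$ is symmetric, with the roles of $X$ and $Y$ swapped). By the definition of the protocol tree, when the execution reaches $v$ the owner Alice partitions her input space into $X_0 \cup X_1$ (a partition that may depend on $v$), and the execution moves to $u$ precisely when $x \in X_b$. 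Hence $(x,y)$ reaches $u$ if and only if $(x,y)$ reaches $v$ --- that is, $(x,y) \in X_v \times Y_v$ --- and $x \in X_b$. Therefore
\[
    \Pi_u \;=\; (X_v \cap X_b) \times Y_v,
\]
which is again a rectangle, with $X_u \defeq X_v \cap X_b$ and $Y_u \defeq Y_v$; when Bob owns $v$ one gets instead $X_u = X_v$ and $Y_u = Y_v \cap Y_b$. Since every internal vertex other than the root is the child of an internal vertex of strictly smaller depth, and is reached exactly in the manner described, the induction goes through.

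I do not expect any genuine obstacle here: the entire content is the observation that at every node the branching decision examines only one of the two coordinates of the input, so the product structure $X_v \times Y_v$ is never broken, only ``thinned'' in one of the two factors. The single point requiring minor care is purely notational --- the partition $X_0, X_1$ used at a node is local to that node, so strictly it should be written $X_0^v, X_1^v$ --- but this does not affect the argument.
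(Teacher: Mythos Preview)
Your proof is correct and is the standard induction-on-depth argument. The paper itself does not prove this statement at all --- it is stated as a \textbf{Fact} in the preliminaries and taken for granted as folklore from communication complexity --- so there is no paper proof to compare against; your write-up would serve perfectly well as the omitted justification.
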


The \textit{communication complexity} of $\Pi$, denoted by $\CC(\Pi)$, is the depth of the tree. 
The \textit{round complexity} of $\Pi$, is the minimum number $k$ such that 
in every path from the root to some leaf, the owner switches at most $(k - 1)$ times.
Clearly, if a protocol has $k$ round, then its communication complexity is at least $k$.
We can safely make the following assumptions for any protocol $\Pi$:
\begin{itemize}
    \item $\Pi$ has $k$ rounds on every input; and 
    \item $\Pi$ communicates $\CC(\Pi)$ bits on every input.
\end{itemize}
Indeed, for any protocol, we can add empty messages and rounds in the end, which boosts the communication complexity by a factor of 2.
\section{Proof of Main Theorem }\label{sec: proof}
\begin{theorem}[Main theorem, \cref{thm:main:distributional} restated]
    Let $\Pi$ be a $(k - 1)$-round deterministic protocol for $\chase_k$ where Alice speaks first such that 
    $$
        \pr[f_A, f_B \gets {[n]^n}]{\Pi(f_A, f_B) = \chase_k(f_A, f_B)} \geq 2/3.
    $$
    Then the communication complexity of $\Pi$ is $\Omega(n / k + k)$.
\end{theorem}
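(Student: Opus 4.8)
\medskip
\noindent\textbf{Proof plan.}
The plan is to carry out the gadgetless lifting strategy in two halves: a lower bound for \emph{restricted} protocols --- those whose input distribution, conditioned on the transcript, is a product of a bounded set of ``fixed'' coordinates of $f_A$ and $f_B$ with a $\gammavalue$-dense remainder --- and a reduction showing that an arbitrary protocol collapses to such a restricted one at a cost controlled by its communication. Fix the chase $v_0=1$, $v_1=f_A(v_0)$, $v_2=f_B(v_1),\dots,v_k$, so that $\chase_k(f_A,f_B)$ is a balanced predicate of $v_k$; the point of the round budget is that a $(k-1)$-round protocol in which Alice speaks first is exactly one message short of ever putting $v_k$ into the transcript, so once the communication is small the output should be essentially an unbiased coin even given the whole transcript. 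The additive $\Omega(k)$ half of the bound is immediate from the canonical-form assumptions ($\CC(\Pi)\ge k-1$), so it suffices to prove $\CC(\Pi)=\Omega(n/k)$; assume toward a contradiction that $\CC(\Pi)=c\le\eps n/k$ for a small constant $\eps$.

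\medskip
\noindent\emph{Reduction to restricted protocols.} Process the protocol tree of $\Pi$ top-down and, immediately after each of the $k-1$ rounds, invoke \cref{lemma:density:restoring:partition} on the speaking player's current input set with $\gamma=\gammavalue$, refining it into $\gamma$-dense pieces and recording the newly fixed coordinates and values of that player's function. The accounting rests on two facts: splitting a uniform set with one communicated bit raises its min-entropy deficiency by at most one bit in expectation, and property~3 of \cref{lemma:density:restoring:partition} converts deficiency into fixed coordinates at rate $(1-\gamma)\log n=0.1$ per coordinate, with the slack terms $\delta_i$ contributing only lower-order error in expectation. Hence, for a $1-o(1)$ fraction of inputs the refinement reaches a leaf at which (i) the transcript of $\Pi$ is determined, (ii) at most $O(c)$ coordinates of $f_A$ and of $f_B$ are fixed, and (iii) the remaining coordinates of each function are $\gamma$-dense; and conditioned on reaching such a leaf the protocol still outputs $\chase_k$ correctly with probability $\ge 2/3-o(1)$. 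This is the restricted protocol to which $\Pi$ reduces.

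\medskip
\noindent\emph{Lower bound for restricted protocols.} Fix such a leaf and analyze the chase under the conditional input distribution. Because each non-fixed coordinate of $f_A$ (and of $f_B$) has min-entropy at least $\gamma\log n=\log n-o(1)$, every pointer produced by a query outside the fixed set satisfies $\Pr[v_i=j]=O(1/n)$ for all $j$; so a union bound over the $\le k$ pointers and the $O(c)$ fixed coordinates, together with a birthday bound on self-intersections, shows that with probability $1-O(kc/n)-O(k^2/n)=1-o(1)$ the chase touches no fixed coordinate and visits $k+1$ distinct vertices. On this event the last query $v_k=f_?(v_{k-1})$ lands on a coordinate $v_{k-1}$ that is new for $f_?$, and $\gamma$-density of $f_?$ on its non-fixed coordinates --- applied to the set of path-query coordinates together with $v_{k-1}$ and then transferred from min-entropy to statistical distance --- forces the conditional law of $v_k$ to lie within $o(1)$ of uniform on $[n]$. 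Its balanced predicate is therefore within $o(1)$ of an unbiased bit even given the entire transcript, so the restricted protocol cannot succeed with probability more than $1/2+o(1)<2/3$, contradicting the previous paragraph. Combined with the trivial $\Omega(k)$ bound this yields $\CC(\Pi)=\Omega(n/k+k)$.

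\medskip
\noindent\emph{The main obstacle.} The delicate point is the last step: the chase must stay near-uniform all the way to step $k$, and the naive route --- reveal $v_0,\dots,v_{k-1}$ and then reason about $f_?(v_{k-1})$ --- breaks down once $k\gg\log n$, because conditioning on $k$ previously queried values of a $\gamma$-dense function can cost up to $\Omega(k)$ bits of min-entropy in the coordinate $v_{k-1}$. The right move is to never reveal the path: it carries $\approx k\log n$ bits of information, far more than the $c\ll k\log n$ bits actually present in the transcript, so one must argue about $v_k$ given the transcript alone --- tracking that the total min-entropy deficiency injected by communication and by the density-restoring partitions together is $O(c)$ rather than $\Theta(k\log n)$, and showing inductively that the walk stays near-uniform as long as it avoids the $O(c)$ fixed coordinates while the conditioning performed along the walk does not itself destroy density. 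Getting this induction to close is the technical heart of the argument, and it is precisely where separating structure (the few fixed coordinates) from pseudorandomness (the dense remainder) buys something over round elimination, which would instead pay an error of order $\sqrt{kc/n}$ per round.
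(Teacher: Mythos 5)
Your proposal correctly sets up the gadgetless-lifting skeleton (density-restoring partition after each message, $O(c)$ fixed coordinates versus a $\gamma$-dense remainder, accuracy controlled by whether the chase hits the fixed set), and you correctly diagnose why the naive ``reveal the path'' route fails for $k \gg \log n$. But the proof has a genuine gap exactly where you flag it: the claim that ``the walk stays near-uniform as long as it avoids the $O(c)$ fixed coordinates'' is asserted, not proved, and your error term $O(kc/n)+O(k^2/n)$ for the chase avoiding the fixed set already presupposes that each pointer $v_i$ is pointwise $O(1/n)$-uniform \emph{conditioned on the transcript and on the earlier pointers}. That conditioning is the problem: $\gamma$-density with $\gamma = \gammavalue$ only gives $\Pr[\rv{X}(j)=\sigma \mid \rv{X}(I)=\alpha] \le n^{-\gamma(|I|+1)}/\Pr[\rv{X}(I)=\alpha]$, which degrades to $2^{\Omega(k)}\cdot n^{-\gamma}$ after conditioning on $k$ path coordinates, so the induction you need does not close from density alone. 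Saying ``never reveal the path'' does not by itself produce an alternative argument.

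The paper closes this gap with a device absent from your proposal: in the decomposition process $\msf{DS}$, at every round boundary it additionally partitions the speaker's set by the single bit $f_A(z_{r-1}) \bmod 2$ (Lines \ref{line:new:round:partition}--\ref{line:new:round}). This costs only one bit of deficiency per round, but it collapses $|\supp(\rv{X}(z_{r-1}))|$ to at most $n/2$, which violates $\gamma$-density at that coordinate and therefore forces the subsequent density-restoring partition to \emph{fix} $z_{r-1}$ entirely. Consequently the pointer $\pt_r$ is a deterministic constant $z_r$ on each rectangle (invariant $(\spadesuit)$ of \cref{lem:invariant}), and there is no walk distribution to control at all: the only stochastic event is whether the freshly fixed pointer $z_\ell$ lands in the other player's already-fixed set $\ol{J_B}$, and at the moment $z_\ell$ is determined its conditional law is exactly $\rv{X}^{(\ell-1)}(z_{\ell-1})$ with $z_{\ell-1}$ still alive, so a single application of density bounds this by $|\ol{J_B}|\cdot n^{-\gamma}$ per round (\cref{claim:bad:probability}). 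This replaces your uncarried induction with a one-step argument per round, and the extra $k$ bits of deficiency from the parity partitions are absorbed into the $O(\CC(\Pi))$ bound on the average fixed size since $\CC(\Pi) \ge k-1$. Without this (or an equivalent mechanism for making the pointer deterministic cheaply), your restricted-protocol lower bound does not go through as written.
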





We use a \textit{decomposition and sampling process} $\msf{DS}$, as shown in Algorithm ~\ref{algo:main}, in our analysis.
$\msf{DS}$ takes as input a protocol $\Pi$, 
and samples a rectangle $R$ that is contained in $\Pi_v$ for some leaf node $v$.
Our proof proceeds in three steps:
\begin{enumerate}
    \item First, \cref{sec:invariant} analyzes crucial invariants during the running of $\msf{DS}$.
    \item Next, \cref{sec:accuracy} shows that the accuracy of $\Pi$ is captured by a quantity called \textit{average fixed size}, which is a natural quantity that arises in the running of $\msf{DS}$.
    \item Finally, \cref{sec:fixed:size} proves that the average fixed size can be bounded from above by $O(\CC(\Pi))$. Consequently, if $\Pi$ enjoys high accuracy, we get a lower bound of $\CC(\Pi)$.
\end{enumerate}

\subsection{The Decomposition and Sampling Process}  \label{sec:invariant}

During the sampling process, we maintain a useful structure of $R$ mainly by a partitioning-then-sampling mechanism:
At the beginning, $R$ is set to be the set of all inputs. 
Walking down the protocol tree, we decompose the rectangle into structured sub-rectangles; then we sample a decomposed rectangle with respect to its size.
In the end, we arrive at a leaf node $v$ and a subrectangle of $\Pi_v$.

\begin{algorithm}[htp] \label{algo:main}
  \KwIn{A protocol $\Pi$ for the problem $\chase_k$.}
  \KwOut{A rectangle $R=X\times Y$, and $J_A,J_B \subseteq [n]$. }
   Initialize $v:= \text{root of }\Pi, r := 1,  X := Y := [n]^n, J_A:= J_B := [n], \msf{bad} := \false.$\\
  \While{$v$ is not a leaf node}{
        \CommentSty{//Invariant: 
    (1) $X \times Y \subseteq \Pi_v$; (2) there exists some $z_{r - 1} \in [n]$ such that $\pt_{r - 1}(f_A, f_B) = z_{r - 1}\ \forall (f_A, f_B) \in X \times Y$ (See \cref{lem:invariant}).} \\
    Let $u_0 := \Pi(v, 0), u_1 := \Pi(v , 1)$ be the two children of $v$. \\

    \If{$\msf{owner}(v) = \msf{Alice}$}{
        Partition $X$ into $X = X^0 \cup X^1$ such that $X^b \times Y \subseteq \Pi_{u_{b}}$ for $b \in \zo$. \\
        Sample $\rv{b} \in \zo$ such that $\pr{\rv{b}  = b} = |X^{b}|/|X|$ for $b \in \zo$. \\
        Update $X := X^{\rv{b}}, v := u_{\rv{b}}$. \\
        \If{$\msf{owner}(u_{\rv{b}}) = \msf{Bob}$}{ 
            \CommentSty{// A new round.} \\
            Further Partition $X$ into $X = X^{0} \cup X^{1}$ where 
            $X^b := \set{f_A \in X: f_A(z_{r - 1}) \bmod 2 = b}$. \label{line:new:round:partition}\\
            Sample $\rv{b}' \in \zo$ such that $\pr{\rv{b}' = b} = |X^{b}|/|X|$ for $b \in \zo$. \label{line:new:round:update}\\
            Update $X := X^{\rv{b}'}, r := r + 1$. \label{line:new:round}
        } 
        
        Let $X = X^1 \cup \cdots \cup X^m$ be the decomposition of $X$ promised by \cref{lemma:density:restoring:partition} with associated sets $I_1, \dots, I_m \subseteq J_A$. \\ \CommentSty{// Invoking \cref{lemma:density:restoring:partition} with $J = J_A, M = n, \gamma = 1 - \frac{0.1}{\log n}$.}\\
        Sample a random element $\boldsymbol{j} \in [m]$ such that $\Pr[\boldsymbol{j} = j] = |X^j|/|X|$ for $j\in [m]$. \\
        Update $X :=  X^{\boldsymbol{j}}, J_A := J_A \setminus  I_{\boldsymbol{j}}$. \label{line:update:3}\\
        \If{$\msf{owner}(u_{\rv{b}}) = \msf{Bob} \land z_{r - 1} \notin J_B$}{
            $\msf{bad} := \true$.
        }
    }
    \If{$\msf{owner}(v) = \msf{Bob}$}{
        Partition $Y$ into $Y = Y^0 \cup Y^1$ such that $X \times Y^b\subseteq \Pi_{u_{b}}$ for $b \in \zo$. \\
        Sample $\rv{b} \in \zo$ such that $\pr{\rv{b}  = b} = |Y^{b}|/|Y|$ for $b \in \zo$. \\
        Update $Y := Y^{\rv{b}}, v := u_{\rv{b}}$. \\
        \If{$\msf{owner}(u_{\rv{b}}) = \msf{Alice}$}{
            Further Partition $Y$ into $Y = Y^{0} \cup Y^{1}$ where 
            $Y^b := \set{f_B \in Y: f_B(z_{r - 1}) \bmod 2 = b}$. \\
            Sample $\rv{b}' \in \zo$ such that $\pr{\rv{b}' = b} = |Y^{b}|/|Y|$ for $b \in \zo$. \\
            Update $Y := Y^{\rv{b}'}, r := r + 1$. 
        }
        
        Let $Y = Y^1 \cup \cdots \cup Y^m$ be the decomposition of $Y$ promised by \cref{lemma:density:restoring:partition} with associated sets $I_1, \dots, I_m \subseteq J_B$. \\
        Sample a random element $\boldsymbol{j} \in [m]$ such that $\Pr[\boldsymbol{j} = j] = |Y^j|/|Y|$ for $j\in [m]$. \\
        Update $Y :=  Y^{\boldsymbol{j}}, J_B := J_B \setminus  I_{\boldsymbol{j}}$. \\
        
        \If{$\msf{owner}(u_{\rv{b}}) = \msf{Alice} \land z_{r - 1} \notin J_A$}{
            $\msf{bad} := \true$.
        }
    }
  }
  \caption{The decomposition and sampling process $\msf{DS}$}
\end{algorithm}

\begin{lemma}[Loop invariant] \label{lem:invariant}
    Set $\gamma \eqdef 1 - \frac{0.1}{\log n}$. Then in the running of $\msf{DS}(\Pi)$, we have the following loop invariants: 
    After each iteration, 
    \begin{itemize}
        \item[($\diamondsuit$)] $X \times Y \subseteq \Pi_v$; 
        \item[($\clubsuit$)] $\rv{X}(J_A), \rv{Y}(J_B)$ are $\gamma$-dense;
        \item[($\heartsuit$)] there exists some $\alpha_A \in [n]^{\ol{J_A}}, \alpha_B \in [n]^{\ol{J_B}}$ such that $x(\ol{J_A}) = \alpha_A, y(\ol{J_B}) = \alpha_B$ for all $x \in X, y \in Y$;
        \item[($\spadesuit$)] there exists some $z_{r} \in [n]$ such that $\pt_{r}(f_A, f_B) = z_r$ for all $(f_A, f_B) \in X \times Y$.
    \end{itemize}
\end{lemma}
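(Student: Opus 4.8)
I would prove the lemma by induction on the number of completed iterations of the \textbf{while} loop of $\msf{DS}(\Pi)$, carrying a slightly strengthened hypothesis: besides $(\diamondsuit)$--$(\spadesuit)$, that \emph{every} pointer revealed so far, i.e., $\pt_0,\dots,\pt_{r-1}$, is constant on $X\times Y$. This strengthened form is what the loop-top comment in Algorithm~\ref{algo:main} records (the lemma phrases it as $(\spadesuit)$; the one-index discrepancy between the comment and $(\spadesuit)$ is harmless once all these pointers are tracked), and it is also what makes the split on line~\ref{line:new:round:partition} well defined. For the base case, at initialization $v$ is the root, $X=Y=[n]^n$, $J_A=J_B=[n]$, $r=1$: then $\Pi_v=[n]^n\times[n]^n$ gives $(\diamondsuit)$; $\rv X([n])$ and $\rv Y([n])$ are uniform, hence $1$-dense and a fortiori $\gamma$-dense, giving $(\clubsuit)$; $\ol{J_A}=\ol{J_B}=\emptyset$ makes $(\heartsuit)$ vacuous; and $\pt_0$ is the fixed start vertex.

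For the inductive step I would assume the invariants at the top of an iteration and treat the case $\msf{owner}(v)=\msf{Alice}$ (the $\msf{Bob}$ case being symmetric under $X\leftrightarrow Y$, $J_A\leftrightarrow J_B$; since $Y$ and $J_B$ are untouched in an Alice-iteration, the ``$\rv Y(J_B)$ is $\gamma$-dense'' and ``$\alpha_B$'' parts of $(\clubsuit)$, $(\heartsuit)$ carry over unchanged). The iteration modifies $X$, $J_A$, and possibly $r$ in three phases: \textbf{(P1)} the protocol split $X=X^0\cup X^1$ with $X^b\times Y\subseteq\Pi_{u_b}$, the sampling of $\rv b$, and $v:=u_{\rv b}$; \textbf{(P2)} if $\msf{owner}(u_{\rv b})=\msf{Bob}$, the split of line~\ref{line:new:round:partition} by the parity of $f_A(z_{r-1})$, a sampling, and $r:=r+1$; \textbf{(P3)} the density-restoring split of \cref{lemma:density:restoring:partition} applied with $J=J_A$, $M=n$, $\gamma=\gammavalue$, the sampling of $\boldsymbol j$, and $X:=X^{\boldsymbol j}$, $J_A:=J_A\setminus I_{\boldsymbol j}$. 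Then $(\diamondsuit)$ is immediate: (P1) places $X\times Y\subseteq\Pi_{u_{\rv b}}=\Pi_v$ and (P2), (P3) only shrink $X$. For $(\heartsuit)$: all $x$ in the current $X$ agree on the old $\ol{J_A}$ by hypothesis, and this persists through (P1), (P2) since they only restrict $X$; after (P3), item~(1) of \cref{lemma:density:restoring:partition} adds agreement on $I_{\boldsymbol j}$, so all surviving $x$ agree on $\ol{J_A}\cup I_{\boldsymbol j}=\ol{J_A\setminus I_{\boldsymbol j}}$. For $(\clubsuit)$: the statement just verified is precisely the hypothesis of \cref{lemma:density:restoring:partition}, so its item~(2) makes each $\rv X^j(J_A\setminus I_j)$ $\gamma$-dense, whence $\rv X(J_A)$ is $\gamma$-dense after the update.

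The substantive point is $(\spadesuit)$. If (P2) does not fire, $r$ is unchanged and $\pt_0,\dots,\pt_{r-1}$ stay constant because the rectangle only shrinks. If (P2) fires, let $\rho$ be the value of $r$ just before the increment, so $z_{r-1}=z_{\rho-1}=\pt_{\rho-1}$ is already constant by hypothesis; it then suffices to show that $f_A(z_{\rho-1})$ --- which, by the pointer-chasing recursion, is the pointer revealed by the round that just ended --- becomes constant on $X\times Y$. If $z_{\rho-1}\in\ol{J_A}$, this is immediate from $(\heartsuit)$. If $z_{\rho-1}\in J_A$, then after line~\ref{line:new:round:partition} the $z_{\rho-1}$-coordinate of $\rv X$ has fixed parity, hence takes at most $\lceil n/2\rceil$ values; in particular, for every block $X^j$ produced by (P3), the $z_{\rho-1}$-coordinate of $\rv X^j$ has min-entropy strictly below $\gamma\log n$ (here $\gamma=\gammavalue$). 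Since item~(2) of \cref{lemma:density:restoring:partition} forces $\rv X^j(J_A\setminus I_j)$ to be $\gamma$-dense, we must have $z_{\rho-1}\in I_j$ for every $j$, in particular $z_{\rho-1}\in I_{\boldsymbol j}$; hence $z_{\rho-1}\in\ol{J_A}$ after the update, and by item~(1) the value $f_A(z_{\rho-1})$ is constant on $X$. This pins down the freshly revealed pointer, establishing $(\spadesuit)$. (The $\msf{bad}$ flag is updated in this iteration only for use later in \cref{sec:accuracy,sec:fixed:size}, where it records whether the freshly revealed pointer has escaped the other party's live-coordinate set; it is irrelevant to $(\diamondsuit)$--$(\spadesuit)$.)

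The step I expect to be the main obstacle is the $z_{\rho-1}\in J_A$ case of $(\spadesuit)$: one must keep the round bookkeeping straight --- which party reveals which pointer in which round, and the fact that an Alice-iteration touches only $X$ while a Bob-iteration touches only $Y$ --- and must notice the ``a fixed-parity coordinate cannot survive in a $\gamma$-dense block'' phenomenon, which is exactly what upgrades line~\ref{line:new:round:partition} from nominally fixing a single bit of the next pointer to forcing the density-restoring step to fix the whole coordinate $z_{\rho-1}$. Everything else is routine bookkeeping on top of \cref{lemma:density:restoring:partition}.
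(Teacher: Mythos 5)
Your proposal is correct and follows essentially the same route as the paper: the same induction, with the key observation that the parity split on line~\ref{line:new:round:partition} leaves the $z_{\rho-1}$-coordinate with support at most $\lceil n/2\rceil$, so the subsequent density-restoring partition must place $z_{\rho-1}$ into $I_{\boldsymbol j}$, fixing $f_A(z_{\rho-1})$ and hence the new pointer. Your explicit base case, the case split on whether $z_{\rho-1}$ was already in $\ol{J_A}$, and the strengthened hypothesis resolving the off-by-one between the loop comment and $(\spadesuit)$ are minor refinements of the paper's argument, not a different approach.
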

\begin{proof}
    Item ($\diamondsuit$) is true because every time $v$ is updated, 
    $X \times Y$ is updated accordingly to a sub-rectangle of $\Pi_v$ and updating $X \times Y$ into its sub-rectangles does not violate this condition.

    Since we applied density restoring partition at the end of each iteration, Item ($\clubsuit$) and $(\heartsuit)$ is guaranteed by \cref{lemma:density:restoring:partition} and the way that $X, Y, J_A, J_B$ are updated.

    We prove the last item ($\spadesuit$) by induction. 
    Assume that the statement holds after the first $(t - 1)$ iterations. 
    WLOG, assume that at the beginning of the $t$-th iteration, $v$ is owned by Alice. 
    Consider the following two cases.
    \begin{itemize}
        \item \underline{Case 1. Not a new round:} Line \ref{line:new:round} is not executed in the $t$-th iteration. Since $r$ remains unchanged and we only update $R$ to be a sub-rectangle of itself, the statement still holds.
        \item \underline{Case 2. A new round begins: } Line \ref{line:new:round} is executed and $r$ is increased by 1. Let $\rho$ denote the value of $r$ before Line \ref{line:new:round}, then after this iteration, we have $r = \rho  + 1$.
        The induction hypothesis guarantees that 
        there exists some $z_{\rho - 1} \in [n]$ such that
        $$
           \msf{pt}_{\rho - 1}(f_A, f_B) = z_{\rho - 1} 
            \text{ for all} (f_A, f_B) \in X \times Y.
        $$
        Due to the partition and the update in Line \ref{line:new:round:partition} and Line \ref{line:new:round:update}, 
        $|\supp (\rv{X}(z_{\rho - 1}))| \leq n / 2$.
        Hence, $\rv{X}(z_{\rho - 1})$ cannot be $\gamma$-dense as we set $\gamma = 1 - \frac{0.1}{\log n}$.
        Observe that after the update in Line \ref{line:update:3}, 
        $\rv{X}(J_A)$ is $\gamma$-dense. 
        Consequently, we must have $z_{\rho - 1} \in \ol{J_A}$, 
        and by item $(\heartsuit)$, there exists some $z_{\rho} \in [n]$ such that $f_A(z_{\rho - 1}) = z_\rho \ \forall f_A \in X$. 
        By definition, for all $(f_A, f_B) \in X \times Y$, 
        $$
           \msf{pt}_{\rho}(f_A, f_B) = f_A(\msf{pt}_{\rho - 1}(f_A, f_B)) 
            = f_A(z_{\rho - 1}) = z_\rho.
        $$
        This is exactly the same statement after the $t$-th iteration (as we have $r = \rho + 1$). 
    \end{itemize}
\end{proof}

The restricted rectangles in this loop invariant are inspired by the protocols of Nisan and Wigderson \cite{nisan1991rounds}. This lemma aims to capture the fact that Alice and Bob cannot get any additional useful information other than coordinate-wise information during their communication.

\subsection{Relating Accuracy and Average Fixed Size} \label{sec:accuracy}

From \cref{lem:invariant} we know that the coordinates in $\ol{J_A}$ and $\ol{J_B}$ are fixed if we only look at the inputs in $X \times Y$.
Intuitively, the advantage of the protocol comes from such fixed coordinates, since the `alive' coordinates $J_A, J_B$ are dense in the sense that $\rv{X}(J_A), \rv{Y}(J_B)$ is $\gamma$-dense. This intuition is formalized in the following lemma.

\begin{lemma}[Relating accuracy and avarage fixed size] \label{lem:accuracy}
Let $\Pi$ be a $(k - 1)$-round deterministic protocol where Alice speaks first. 
Then 
$$
    \pr[f_A, f_B \gets {[n]}^n]{\Pi(f_A, f_B) = \chase_k(f_A, f_B)} \leq \frac{n^{1 - \gamma}}{2} +  n^{-\gamma} \cdot (k - 1) \cdot \ex[(R, J_A, J_B) \gets \msf{DS}(\Pi)]{|\ol{J_A}| + |\ol{J_B}|}.
$$
\end{lemma}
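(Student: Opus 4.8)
I would argue in three steps: pass from the accuracy of $\Pi$ to the accuracy on a single rectangle produced by $\msf{DS}$; show that on such a rectangle $\chase_k$ is pseudorandom unless a certain pointer collides with a coordinate already frozen by the communication; and bound the probability of that collision by a union bound spread over all $k-1$ rounds. For \emph{Step 1}, observe that $\msf{DS}(\Pi)$ couples with the uniform input distribution: append to $\msf{DS}$ a last step that draws $(\rv f_A,\rv f_B)$ uniformly from its output rectangle $X\times Y$; since every update of $X$ or $Y$ splits the current set into disjoint pieces and keeps one with probability proportional to its size, the probability of producing a fixed pair $(f_A,f_B)$ telescopes to $n^{-2n}$, so $(\rv f_A,\rv f_B)$ is uniform on $[n]^n\times[n]^n$. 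Writing $v(R)$ for the leaf with $R\subseteq\Pi_{v(R)}$ and $o_{v(R)}$ for its constant output,
\[
\pr[f_A,f_B\gets{[n]^n}]{\Pi(f_A,f_B)=\chase_k(f_A,f_B)}=\ex[(R,J_A,J_B)\gets\msf{DS}(\Pi)]{\;\pr[(f_A,f_B)\gets R]{o_{v(R)}=\chase_k(f_A,f_B)}\;}.
\]

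\noindent\emph{Step 2 (the answer is pseudorandom on a fixed rectangle).} Fix an output $(R,J_A,J_B)$ of $\msf{DS}(\Pi)$ with $R=X\times Y$. As $\Pi$ uses exactly $k-1$ rounds, the counter $r$ (which starts at $1$ and increases once per round boundary) ends at $r=k-1$, so by \cref{lem:invariant} there is $z\in[n]$ with $\pt_{k-1}(f_A,f_B)=z$ for all $(f_A,f_B)\in R$, while $\chase_k$ needs one further pointer step. Let $g\in\set{f_A,f_B}$ be the function used in that step; its player is the one who does \emph{not} speak in round $k-1$, so $g$'s dead coordinates are already fixed once $z$ is, and $\chase_k(f_A,f_B)=g(z)\bmod 2$ on $R$. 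If $z$ still lies in $g$'s alive set (say $g=f_B$, $z\in J_B$), then by item $(\clubsuit)$ of \cref{lem:invariant} the coordinate $\rv Y(z)$ has min-entropy $\ge\gamma\log n$, so $g(z)$ has any fixed parity with probability at most $\frac{n^{1-\gamma}}{2}$; otherwise item $(\heartsuit)$ makes $g(z)$ constant on $R$ and only the trivial bound is available. Hence the accuracy on $R$ is at most $\frac{n^{1-\gamma}}{2}+\one[z\in\ol{J_A}\cup\ol{J_B}]$, and combining with Step~1,
\[
\pr[f_A,f_B\gets{[n]^n}]{\Pi=\chase_k}\le\frac{n^{1-\gamma}}{2}+\pr[(R,J_A,J_B)\gets\msf{DS}(\Pi)]{z\in\ol{J_A}\cup\ol{J_B}}.
\]

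\noindent\emph{Step 3 (bounding the collision probability).} It remains to prove $\pr[\msf{DS}(\Pi)]{z\in\ol{J_A}\cup\ol{J_B}}\le n^{-\gamma}(k-1)\ex{|\ol{J_A}|+|\ol{J_B}|}$, with the expectation over $(R,J_A,J_B)\gets\msf{DS}(\Pi)$. The key is a \emph{spreading claim}: at the iteration in which $\msf{DS}$ freezes the round-$j$ pointer as $z_j=f_{P_j}(z_{j-1})$ (with $P_j$ the player computing it and $z_{j-1}$ the already-frozen predecessor), conditioned on the history just before that iteration, if $z_{j-1}$ still lies in $P_j$'s alive set then $\pr{z_j=c\mid\text{history}}\le n^{-\gamma}$ for every $c\in[n]$ --- because $z_j=f_{P_j}(z_{j-1})$, and the protocol split, parity split, and density-restoring partition (\cref{lemma:density:restoring:partition}) of $P_j$'s input set in that iteration are all proportional-sampling steps, so the conditional law of $f_{P_j}(z_{j-1})$ equals its law under $P_j$'s current set, on which item $(\clubsuit)$ of \cref{lem:invariant} bounds the min-entropy from below by $\gamma\log n$. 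Applying this with $j=k-1$ and union-bounding over the dead set of $g$'s player (already frozen at this point) gives $\Pr[z\in\ol{J_A}\cup\ol{J_B}]\le n^{-\gamma}\ex{|\ol{J_A}|+|\ol{J_B}|}+\Pr[z_{k-2}\text{ is dead for the owner of round }k-1\text{ when that round begins}]$; the residual probability is exactly the event flagged by the $\msf{bad}$ check at the $(k-2)$-nd round boundary, and it unwinds by the same argument applied to $z_{k-2}=f_{P_{k-2}}(z_{k-3})$, and so on down to round $1$, where the base case vanishes since $z_0$ is the fixed start, which lies in the initial alive set $[n]$. Each of the resulting $k-1$ contributions is at most $n^{-\gamma}\ex{|\ol{J_A}|+|\ol{J_B}|}$, so their sum is $n^{-\gamma}(k-1)\ex{|\ol{J_A}|+|\ol{J_B}|}$; with Step~2 this proves the lemma.

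\noindent\emph{Expected main obstacle.} The technical heart is the bookkeeping in Step~3: one must invoke $(\clubsuit)$ at exactly the iteration boundary at which it holds with constant $\gamma$, check that the three within-iteration partitions do not disturb the relevant conditional law, and keep the union bound confined to dead-coordinate sets that are measurable for the conditioning --- which is why the recursion must be phrased in terms of the dead set of the player \emph{not} owning the round being processed, so that this set stays frozen while the round is processed (in particular, any coordinates killed by the very density-restoring step that freezes $z_j$ belong to the other player and are harmless). Aligning the indices so that the residual $\msf{bad}$ indicators telescope into exactly the factor $k-1$, and verifying that $\chase_k$ depends on just one pointer step past $\pt_{k-1}$, are the remaining routine points.
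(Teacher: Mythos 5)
Your proposal is correct and follows essentially the same route as the paper: the same proportional-sampling coupling of $\msf{DS}(\Pi)$ with the uniform input distribution, the same ``hard part'' bound of $\tfrac{n}{2}\cdot n^{-\gamma}$ on a rectangle where the last pointer is still alive (the paper's \cref{claim:hard:part}), and the same per-round collision bound via the density invariant $(\clubsuit)$ and the fact that the non-speaking player's dead set is frozen during a round (the paper's \cref{claim:bad:probability}). The only cosmetic difference is that you organize the collision bound as a backward recursion over round boundaries, whereas the paper sums over the first round at which the $\msf{bad}$ flag is raised; these are the same decomposition and yield the identical $(k-1)\cdot n^{-\gamma}\cdot\ex{|\ol{J_A}|+|\ol{J_B}|}$ term.
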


The proof of the lemma is by the following two claims. 
The first claim readily says that conditioned on the flag $\msf{bad}$ is not
raised, $\Pi$ has little advantage in the rectangle $R$ output by $\msf{DS}(\Pi)$.
The second claim shows the probability that the flag is raised is bounded in terms of the average fixed size.

\begin{claim} \label{claim:hard:part}
    If $\msf{DS}(\Pi)$ outputs $(R = X \times Y, J_A, J_B)$ and $\msf{bad} = \false$ in the end, then 
    $$
        \pr[(f_A, f_B) \gets R]{\Pi(f_A, f_B) = \chase_k(f_A, f_B)} \leq \frac{n^{1 - \gamma}}{2}.
    $$
\end{claim}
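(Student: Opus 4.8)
The plan is to use that $R$ sits inside a leaf rectangle, so $\Pi$ is constant on $R=X\times Y$; let $o$ denote that value. Since $\chase_k(f_A,f_B)=\pt_k(f_A,f_B)\bmod 2$, it suffices to show that $\Pr_{(f_A,f_B)\gets R}[\pt_k\bmod 2=o]\le n^{1-\gamma}/2$, i.e.\ that the answer bit is close to unbiased on $R$. First I would read off the state of $\msf{DS}(\Pi)$ at termination: $\Pi$ is $(k-1)$-round with Alice speaking first, so on every root--leaf path the owner switches $k-2$ times (after the padding allowed in the preliminaries), and $\msf{DS}$ increments $r$ exactly at each switch, so $r=k-1$ when $\msf{DS}$ halts. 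By \cref{lem:invariant} there is a fixed $z_{k-2}\in[n]$ with $\pt_{k-2}(f_A,f_B)=z_{k-2}$ for all $(f_A,f_B)\in R$ (part ($\spadesuit$)), and $\rv X(J_A),\rv Y(J_B)$ are $\gamma$-dense (part ($\clubsuit$)). Assume the last round $k-1$ is owned by Alice (the other case is symmetric, exchanging $\rv X\leftrightarrow\rv Y$ and $J_A\leftrightarrow J_B$); then $\pt_{k-1}=f_A(z_{k-2})$ and $\pt_k=f_B\big(f_A(z_{k-2})\big)$.

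Next I would exploit the product structure of $R$: drawing $(f_A,f_B)\gets R$ makes $f_A,f_B$ independent, and $\pt_{k-1}=f_A(z_{k-2})$ is a function of $f_A$ only. Conditioning on $f_A$ fixes $j:=f_A(z_{k-2})\in[n]$, and
\[
\Pr_{(f_A,f_B)\gets R}\big[\pt_k\bmod 2=o\,\big|\,f_A\big]=\Pr_{f_B\gets Y}\big[f_B(j)\bmod 2=o\big]=\Pr\big[\rv Y(j)\bmod 2=o\big].
\]
Provided $j\in J_B$, applying $\gamma$-density of $\rv Y(J_B)$ to the singleton $\{j\}$ gives $\ent_\infty(\rv Y(j))\ge\gamma\log n$, hence $\Pr[\rv Y(j)=v]\le n^{-\gamma}$ for every $v$, hence $\Pr[\rv Y(j)\bmod 2=o]\le\lceil n/2\rceil\,n^{-\gamma}\le n^{1-\gamma}/2$ (assuming $n$ even, or absorbing the $O(n^{-\gamma})$ lower-order term). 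Averaging over $f_A$ then yields the bound, provided $f_A(z_{k-2})\in J_B$ for every $f_A\in X$.

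Establishing exactly that last containment — that the pointer $\pt_{k-1}$ can never land on a coordinate $\msf{DS}$ has already killed in Bob's input — is where the hypothesis $\msf{bad}=\false$ enters, and I expect it to be the crux. The flag $\msf{bad}$ is raised precisely when, at a round boundary, the coordinate the newly speaking player must read to continue chasing the pointer has already been deactivated; so $\msf{bad}=\false$ ought to propagate to the statement that the coordinate Bob would have to read to produce $\pt_k$, namely $\pt_{k-1}$, still lies in the active set $J_B$. I would prove this by walking down the sampled path and tracking, at each step, which coordinates enter $\ol{J_A}$ or $\ol{J_B}$ against the successive pointer values $z_1,\dots,z_{k-1}$, using ($\heartsuit$) and the argument in the proof of \cref{lem:invariant} that a pointer coordinate leaves the active set exactly when its image stops being $\gamma$-dense. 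This step is load-bearing: if instead $\pt_{k-1}\in\ol{J_B}$, then by ($\heartsuit$) the value $f_B(\pt_{k-1})=\alpha_B(\pt_{k-1})$ is already determined on $R$, so the constant output $o$ could agree with $\chase_k$ with probability $1$ on that part of $R$ — the $\msf{bad}$ flag is exactly the mechanism introduced to exclude this.
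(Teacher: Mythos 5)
Your setup (constant output on the leaf rectangle, single-coordinate $\gamma$-density giving $\frac{n}{2}\cdot n^{-\gamma}$) matches the paper, but your proof diverges at exactly the step you flag as the crux, and that step has a genuine gap. You track the process only to $r=k-1$ with $\pt_{k-2}=z_{k-2}$ fixed, so the answer is $f_B(f_A(z_{k-2}))\bmod 2$ and you need $f_A(z_{k-2})\in J_B$ \emph{for every} $f_A\in X$. That statement is not implied by $\msf{bad}=\false$ and is false in general: the flag is only ever tested against the \emph{already-fixed} pointers $z_{\rho}$ at round boundaries, never against the still-random value $f_A(z_{k-2})$, which by ($\clubsuit$) is spread over at least $n^{\gamma}$ values and has no reason to avoid $\ol{J_B}$. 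The best you could salvage from your route is an extra additive error of $\size{\ol{J_B}}\cdot n^{-\gamma}$ for the event $f_A(z_{k-2})\in\ol{J_B}$ (on which the output can be correct with probability $1$), which exceeds the stated bound $\frac{n^{1-\gamma}}{2}$ for this per-outcome claim and would have to be folded into \cref{claim:bad:probability} instead.

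The paper avoids the two-application structure entirely by padding $\Pi$ to have exactly $k$ rounds (an empty final round); this is the meaning of ``assume the protocol always has $k$ round'' in its proof, and it is why \cref{claim:bad:probability} sums over $k-1$ round boundaries rather than $k-2$. With the padding, $\msf{DS}$ performs one more parity-partition and density-restoring step at the $(k-1)\to k$ boundary, so at the leaf the invariant fixes $\pt_{k-1}$ to a constant $z_{k-1}$ on all of $R$, and the final $\msf{bad}$ check is literally ``$z_{k-1}\in J_A$''. Only one function application then remains, and the bound is the single-coordinate density computation you already have. (You were also led astray partly by the paper's own off-by-one between the algorithm's inline comment, which fixes $\pt_{r-1}$, and the statement of ($\spadesuit$), which writes $\pt_r$; but the proof of the claim unambiguously uses the padded accounting with $\pt_{k-1}$ fixed.) To repair your write-up, adopt the padding and replace your conditioning-on-$f_A$ argument with the direct one-step bound.
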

\begin{claim} \label{claim:bad:probability}
    $\pr[\msf{DS}(\Pi)]{\msf{bad} = \true} \leq n^{-\gamma} \cdot (k - 1)\cdot \ex[(R, J_A, J_B) \gets \msf{DS}(\Pi)]{|\ol{J_A}| + |\ol{J_B}|}$.
\end{claim}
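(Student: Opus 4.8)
The plan is to charge each raising of $\msf{bad}$ to the round transition at which it \emph{first} occurs, and to bound that probability by $n^{-\gamma}$ times the expected number of coordinates the \emph{other} player has already fixed. A run of $\msf{DS}(\Pi)$ triggers at most $k-1$ round transitions; for a transition that begins round $m$ (so $m\ge 2$, and, say, it is an Alice$\to$Bob transition, i.e. $m$ is even --- the Bob$\to$Alice case is symmetric), let $C_m$ be the event that the $\msf{bad}$ condition ``$z_{m-1}\notin J_B$'' holds at the $\msf{bad}$ check executed there. Then $\{\msf{bad}=\true\}=\bigcup_{m\ge2}C_m=\bigsqcup_{m\ge2}\big(C_m\cap\bigcap_{2\le m'<m}\neg C_{m'}\big)$, and for $m\ge3$ the $m$-th piece is contained in $C_m\cap\neg C_{m-1}$; so it suffices to prove $\pr{C_2}\le n^{-\gamma}\ex{|\ol{J_A}|+|\ol{J_B}|}$ and $\pr{C_m\cap\neg C_{m-1}}\le n^{-\gamma}\ex{|\ol{J_A}|+|\ol{J_B}|}$ for $m\ge3$, then add the (at most $k-1$) estimates. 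It is important \emph{not} to union bound over $\pr{C_m}$ directly, as that loses a factor of $k$.

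\textbf{The right conditioning: a checkpoint at the start of round $m-1$.} Fix $m\ge3$ and condition on the state $S=(\tilde X\times\tilde Y,\tilde J_A,\tilde J_B,z_{m-2})$ of $\msf{DS}$ at the end of the iteration in which round $m-1$ began (for $m=2$, take $S$ to be the initial state, where $z_0$ is the fixed starting pointer and $J_A=[n]$). By \cref{lem:invariant}, which holds unconditionally, $\rv{\tilde X}(\tilde J_A)$ is $\gamma$-dense and $\pt_{m-2}\equiv z_{m-2}$ on $\tilde X\times\tilde Y$. Two facts make $S$ the right place to condition. First, the ``$\msf{bad}:=\true$'' instruction executed at the \emph{preceding} transition reads exactly ``$z_{m-2}\notin J_A$'' with the then-current $J_A$, which equals $\tilde J_A$ since $J_A$ is never modified inside a Bob-owned iteration; hence on $\neg C_{m-1}$ we have $z_{m-2}\in\tilde J_A$, and applying $\gamma$-density to the singleton $I=\{z_{m-2}\}$ gives $\pr{\rv{\tilde X}(z_{m-2})=i}\le n^{-\gamma}$ for every $i\in[n]$. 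Second, $J_B$ is not modified between $S$ and the $\msf{bad}$ check beginning round $m$ (round $m-1$ is Alice-owned, and that check lies in an Alice-owned iteration), so the $\ol{J_B}$ in ``$z_{m-1}\notin J_B$'' equals $\ol{\tilde J_B}$; both it and the event $C_{m-1}$ are functions of $S$.

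\textbf{Proportional sampling controls the new pointer.} Since $m-1$ is odd, $z_{m-1}=f_A(z_{m-2})$, and (as in the proof of \cref{lem:invariant}) the value tested by the $m$-th $\msf{bad}$ check is the common value of $f_A(z_{m-2})$ on the rectangle $\msf{DS}$ produces at the end of the iteration beginning round $m$; its $X$-part is a sub-piece of $\tilde X$ reached by a chain of ``partition, then sample a piece with probability proportional to its size'' updates. Telescoping, that $X$-part equals any fixed sub-piece $p\subseteq\tilde X$ with probability $|p|/|\tilde X|$, so summing over the disjoint pieces on which $f_A(z_{m-2})\equiv i$ gives $\pr{z_{m-1}=i\mid S}\le\pr{\rv{\tilde X}(z_{m-2})=i}$, which on $\neg C_{m-1}$ is $\le n^{-\gamma}$. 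Summing over $i\in\ol{\tilde J_B}$ gives $\one[\neg C_{m-1}]\cdot\pr{C_m\mid S}\le n^{-\gamma}|\ol{\tilde J_B}|$; taking expectation over $S$ and using $|\ol{\tilde J_B}|\le|\ol{J_B}|$ (fixed sets only grow) yields $\pr{C_m\cap\neg C_{m-1}}\le n^{-\gamma}\ex{|\ol{J_B}|}\le n^{-\gamma}\ex{|\ol{J_A}|+|\ol{J_B}|}$. A Bob$\to$Alice transition is bounded by $n^{-\gamma}\ex{|\ol{J_A}|}$ symmetrically, and $\pr{C_2}=0$ since $\ol{J_B}=\emptyset$ before Bob speaks. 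Summing the at most $k-1$ bounds proves the claim.

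\textbf{Expected main obstacle.} The delicate point is the first fact above: recognizing that the $\msf{bad}$ condition checked at transition $m-1$ is \emph{precisely} the assertion that coordinate $z_{m-2}$ is still alive in $J_A$, which is exactly what the $\gamma$-density invariant needs at transition $m$; without this, a naive union over $\pr{C_m}$ costs an extra factor $k$. Everything else is bookkeeping --- tracking which copies of $J_A,J_B$ and which rectangle are relevant, and observing that $z_{m-2}$ need not remain alive \emph{after} the checkpoint, since proportional sampling from the checkpoint is all that is used and $\gamma$-density is invoked only in the one-line bound $\pr{\rv{\tilde X}(z_{m-2})=i}\le n^{-\gamma}$.
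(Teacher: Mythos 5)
Your proof is correct and follows essentially the same route as the paper's: decompose $\{\msf{bad}=\true\}$ by the round transition at which the condition first holds, condition on the state at the start of that round, use the fact that the previous check having passed means the current pointer coordinate is still alive (so $\gamma$-density gives $\Pr[\rv{\tilde X}(z_{m-2})=i]\le n^{-\gamma}$), and observe that proportional sampling makes the new pointer distributed as $\rv{\tilde X}(z_{m-2})$, yielding $n^{-\gamma}|\ol{J_B}|$ per transition. The paper phrases the decomposition via "first-time" events $\ee_\ell$ rather than $C_m\cap\neg C_{m-1}$, but the argument is the same.
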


Next, we first prove \cref{lem:accuracy} using the above two claims, and the proof of the claims is followed. 
\begin{proof}[Proof of \cref{lem:accuracy}]
    Note that in the running of $\msf{DS}(\Pi)$, 
    we always update $R$ to a randomly chosen rectangle and the probability of each rectangle being chosen is proportional to its size. 
    Consequently, 
    \begin{align*}
        &\quad \pr[f_A, f_B \gets {[n]^n}]{\Pi(f_A, f_B) = \chase_k(f_A, f_B)} \\
        &= \pr[(R, J_A, J_B) \gets \msf{DS}(\Pi), (f_A, f_B) \gets R]{\Pi(f_A, f_B) = \chase_k(f_A, f_B)} \\
        &\leq \pr[\msf{DS}(\Pi)]{\msf{bad} = \true} + \pr[(R, J_A, J_B) \gets \msf{DS}(\Pi), (f_A, f_B) \gets R]{\Pi(f_A, f_B) = \chase_k(f_A, f_B) \land \msf{bad} = \false} \\
        &\leq \frac{n^{1 - \gamma}}{2} +  n^{-\gamma} \cdot (k - 1) \cdot \ex[(R, J_A, J_B) \gets \msf{DS}(\Pi)]{|\ol{J_A}| + |\ol{J_B}|}.
    \end{align*}
    where the last step is by \cref{claim:hard:part} and \cref{claim:bad:probability}.
\end{proof}

It remains to prove the two claims.

\begin{proof}[Proof of \cref{claim:hard:part}]
    WLOG, assume $k - 1$ is odd and the protocol always has $k$ round.
    Let $z_{k - 1}$ be the pointer guaranteed by the loop invariant (\cref{lem:invariant}), \ie $\pt_{k - 1}(f_A, f_B) = z_{k - 1}$ for all $(f_A, f_B) \in R$. Since $\msf{bad} = \false$, we have $z_{k - 1} \in J_A$. Again by the loop invariant, $\ent_\infty(\rv{X}(z_{k - 1})) \geq \gamma$. 
    Moreover, since $R$ is contained in some leaf node of $\Pi$, 
    $\Pi$ output the same answer in $R$, say $b^* \in \zo$.
    Consequently, 
    \begin{align*}
        \pr[(f_A, f_B) \gets R]{\Pi(f_A, f_B) = \chase_k(f_A, f_B)} 
        &= \pr[f_A \gets X]{f_A(z_{k - 1}) \bmod 2 = b^*} \\
        &\leq \sum_{\sigma \in [n] : \sigma \bmod 2 = b^*} \pr[f_A \gets X]{f_A(z_{k - 1}) = \sigma} \\
        &\leq \frac{n}{2} \cdot n^{-\gamma}.
    \end{align*}
\end{proof}

\begin{proof}[Proof of \cref{claim:bad:probability}]
    Let $\ee_\ell$ denote the event that the flag $\msf{bad}$ is raised when $r = \ell + 1$ (\ie when the $\ell$-th round ends) for the first time.
    Clearly, 
    $
        \pr{\msf{bad} = \true} = \sum_{\ell = 1}^{k - 1} \pr{\ee_\ell}.
    $
    It suffices to bound each $\pr{\ee_\ell}$. 

    Assume $\ell$ is odd, meaning that Alice speaks in the $\ell$-th round.
    Let $\msf{coin}$ denote the randomness used for the first $(\ell - 1)$
    rounds. 
    Let $X^{(\ell - 1)}$, $J_A^{(\ell - 1)}, J_B^{(\ell - 1)}$ be the sets $X, J_A, J_B$ when executing $\msf{DS}(\Pi)$ using $\msf{coin}$ until the $\ell$-th round begins.
    Let $z_{\ell - 1}$ be the pointer promised by the invariant.
    For $\ee_\ell$ to happen, we must have $\msf{bad} = \false$ until the $\ell$-th round begins, meaning that $z_{\ell - 1} \in J_A^{(\ell - 1)}$.

    Note that the random variable $\rv{z}_\ell$ exactly has the same distribution as $\rv{X}^{(\ell - 1)}(z_{\ell - 1})$.
    This is because, in the $\ell$-th round (\ie until $r$ steps to $\ell + 1$), we decompose $X^{(\ell - 1)}$ into finer sets and update $X$ to be one of them with probability proportional to their size. Therefore, 
    \begin{align*}
        \pr[\msf{coin}']{\ee_t} = \pr[\msf{coin}']{\rv{z}_\ell \notin J_B^{(\ell - 1)}}
        = \pr[f_A \gets \rv{X}^{(\ell - 1)}]{f_A(\rv{z}_{\ell - 1} ) \notin J_B^{(\ell -1)}}
        &= \sum_{\sigma \in \ol{J_B^{(\ell - 1)}}} \pr[f_A \gets \rv{X}^{(\ell - 1)}]{f_A(\rv{z}_{\ell - 1})= \sigma} \\
        &\leq \size{\ol{J_B^{(\ell - 1)}}} \cdot n^{-\gamma},
    \end{align*}
    where we fix $\msf{coin}$ and the probability runs over $\msf{coin}'$, the randomness used afterward; the last inequality holds because $z_{\ell - 1} \in J_A^{(\ell - 1)}$ and $\rv{X}^{(\ell - 1)}\left(J_A^{(\ell - 1)}\right)$ is $\gamma$-dense (by Item ($\clubsuit$) in \cref{lem:invariant}).
    Averaging over $\msf{coin}$, we get 
    $$
         \pr[\msf{DS}(\Pi)]{\ee_\ell} \leq \ex[\msf{coin}]{\size{\ol{J_{B}^{(\ell - 1)}}}} \cdot n^{-\gamma} \leq \ex[(R, J_A, J_B) \gets \msf{DS}(\Pi)]{|\ol{J_{B}}|} \cdot n^{-\gamma},
    $$
    where the second inequality holds because $J_B$ becomes smaller and smaller during the execution.
    
    For even $\ell$'s, we analogously have $\pr{\ee_\ell} \leq \ex{|\ol{J_A}|} \cdot n^{-\gamma}$, and hence the claim follows from union bound. 
\end{proof}

\subsection{Average Fixed Size is Bounded by Communication}
\label{sec:fixed:size}

Now that the accuracy of a protocol $\Pi$ is bounded from above by the average fixed size (\ie $\ex[(R, J_A, J_B) \gets \msf{DS}(\Pi)]{|\ol{J_A}| + |\ol{J_B}|}$), in what follows we show that the average fixed size is at most $O(\CC(\Pi))$.
Formally, we prove that 

\begin{lemma} \label{lem:fixed:size}
Let $\Pi$ be a $(k - 1)$-round deterministic protocol where Alice speaks first. 
Then 
$$
    \ex[(R, J_A, J_B) \gets \msf{DS}(\Pi)]{|\ol{J_A}| +  |\ol{J_B}|} \leq \frac{3\CC(\Pi)}{(1 - \gamma)\log n}.
$$
\end{lemma}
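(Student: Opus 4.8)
The strategy is to track how much the total deficiency $\defi(\rv X(J_A)) + \defi(\rv Y(J_B))$ can grow over the course of $\msf{DS}(\Pi)$, and to charge that growth against the communication bits of $\Pi$. At the start of $\msf{DS}$, $J_A = J_B = [n]$ and the distributions are the uniform distributions over $[n]^n$, which have zero deficiency. At the end, $|\ol{J_A}| + |\ol{J_B}|$ is the total number of coordinates that have been fixed, and because $\rv X(J_A)$ and $\rv Y(J_B)$ remain $\gamma$-dense throughout (Item $(\clubsuit)$ of \cref{lem:invariant}), fixing a coordinate means the deficiency budget has been spent: by Property 3 of \cref{lemma:density:restoring:partition}, each coordinate removed from $J_A$ (resp. $J_B$) costs roughly $(1-\gamma)\log n$ in deficiency, up to the correction term $\delta_j$. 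Since deficiency is always nonnegative, this gives $(1-\gamma)\log n \cdot \ex{|\ol{J_A}| + |\ol{J_B}|} \lesssim \text{(total deficiency injected)}$, and the point is that the deficiency is only ever injected by (a) the bit communicated at an Alice/Bob node, and (b) the "new round" partition on line~\ref{line:new:round:partition}, and (c) the $\delta_j$ slack terms.

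First I would set up the potential function $\Phi \defeq \defi(\rv X(J_A)) + \defi(\rv Y(J_B))$ and analyze its expected change in one iteration of the while loop, conditioned on the history so far. There are three sources of change within an iteration. (i) When the owner partitions $X = X^0 \cup X^1$ according to the protocol message and we sample $\rv b$ proportional to size, the expected increase in $\defi(\rv X)$ — here measured as a flat-distribution min-entropy deficiency — is exactly the binary entropy of the split, which is at most $1$ bit; this is the standard "one bit of communication costs one bit of deficiency in expectation" fact, provable because $\E[\log(|X|/|X^{\rv b}|)] = H(\rv b) \le 1$. (ii) When a new round begins, the extra partition on line~\ref{line:new:round:partition} by the parity of $f_A(z_{r-1})$ is again a two-way split, contributing at most another $1$ bit in expectation by the same reasoning. (iii) The density-restoring partition on line~\ref{line:update:3}: by Property 3 of \cref{lemma:density:restoring:partition}, passing to $\rv X^{\boldsymbol j}(J_A \setminus I_{\boldsymbol j})$ changes the deficiency by $-(1-\gamma)\log n \cdot |I_{\boldsymbol j}| + \delta_{\boldsymbol j}$, and crucially $\ex{\delta_{\boldsymbol j}} = \ex{\log(|X|/|\cup_{i \ge \boldsymbol j} X^i|)} \le 1$ again, since $\boldsymbol j$ is sampled proportional to size (this is the same entropy-of-a-size-proportional-choice bound). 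So each iteration injects at most $3$ bits of deficiency in expectation (one from the message, one possibly from the new-round parity split, one from the $\delta$ term), while removing $(1-\gamma)\log n \cdot \ex{|I_{\boldsymbol j}|}$.

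Summing over all iterations and using that the number of iterations equals $\CC(\Pi)$ (we assumed $\Pi$ communicates $\CC(\Pi)$ bits on every input, so the while loop runs exactly $\CC(\Pi)$ times — note the new-round parity split happens at most once per iteration and only when the owner changes, so it is safely absorbed into the per-iteration budget of $3$): the total deficiency injected over the whole run is at most $3\,\CC(\Pi)$ in expectation, and the total deficiency removed is $(1-\gamma)\log n \cdot \ex{\sum_{\text{iterations}} |I_{\boldsymbol j}|} = (1-\gamma)\log n \cdot \ex{|\ol{J_A}| + |\ol{J_B}|}$, since the $I_{\boldsymbol j}$'s removed from $J_A$ over the run partition $\ol{J_A}$ (and likewise for $B$). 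Since the final $\Phi$ is nonnegative and the initial $\Phi$ is zero, we get $0 \le \Phi_{\text{final}} = \Phi_{\text{initial}} + (\text{injected}) - (\text{removed})$, hence $(1-\gamma)\log n \cdot \ex{|\ol{J_A}| + |\ol{J_B}|} \le 3\,\CC(\Pi)$, which rearranges to the claimed bound.

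The main obstacle is making step (i)/(ii)/(iii) rigorous in the right entropy notion: \cref{lemma:density:restoring:partition} is phrased in terms of min-entropy deficiency of flat distributions, and one must check that all the distributions arising in $\msf{DS}$ stay flat (they do — we only ever restrict to subsets and the uniform distribution over a subset is flat), so that "deficiency" coincides with $|J|\log n - \log|X|$ and the bookkeeping $\Phi_{\text{after}} - \Phi_{\text{before}} = \log|X_{\text{before}}| - \log|X_{\text{after}}| - \log n \cdot |I|$ is exact. Once flatness is in hand, every inequality above is either the trivial $H(\rv b) \le 1$ for a binary variable, or a direct application of Property 3 of \cref{lemma:density:restoring:partition}, and the telescoping is clean. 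A minor point to handle carefully is that the new-round parity partition on line~\ref{line:new:round:partition} may itself leave $\rv X(z_{r-1})$ non-dense, but that is fine: the subsequent density-restoring partition on line~\ref{line:update:3} removes $z_{r-1}$ from $J_A$ and restores density, and the deficiency accounting already charges for it through the two $\le 1$-bit terms.
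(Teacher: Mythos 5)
Your proposal is correct and follows essentially the same density-increment argument as the paper: the same potential $\defi(\rv X(J_A))+\defi(\rv Y(J_B))$, the same three sources of deficiency (the protocol bit, the new-round parity split, and the $\delta_{\boldsymbol j}$ slack from \cref{lemma:density:restoring:partition}, each at most $1$ in expectation), and the same telescoping against nonnegativity of the potential. The only cosmetic difference is the bookkeeping of the new-round split (the paper charges it via an indicator summing to $k\le\CC(\Pi)$, you charge $3$ per iteration uniformly), which yields the identical bound $3\CC(\Pi)/((1-\gamma)\log n)$.
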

\begin{remark}
    We shall set $\gamma := \gammavalue$ and hence the right-handed side equals $30\CC(\Pi)$.
\end{remark}
\begin{proof}
We shall prove this lemma by density increment argument. 
That is, we study the change of the density function 
\begin{equation}
    \label{equ:density:function}
    \defi(R) \eqdef \defi(\rv{X}(J_A)) + \defi(\rv{Y}(J_B)).
\end{equation}
in each iteration.
Let $\rv{\phi_t}$ be the value of $\defi(R)$ at the end of the $t$-th iteration. 
Assume without loss of generality Alice speaks (\ie $\msf{owner}(v) = \msf{Alice}$) in the $t$-th iteration.

We fix the random coins used for the first $(t - 1)$ iterations and consider the updates in the current iteration. 
\begin{enumerate}
    \item First, $X$ is partitioned into $X = X^0 \cup X^1$ according to $\Pi$.
    Then, $X$ is updated to $X^{b}$ with probability $\frac{|X^{b}|}{|X|}$.
    Consequently, $\defi(\rv{X}(J_A))$ will increase as $|X|$ shrinks, and in expectation (over the random choice of $\rv{b}$) the increment is 
    \begin{equation}
    \sum_{b \in \zo} \frac{|X^{b}|}{|X|}\log\bigg(\frac{|X|}{|X^{b}|}\bigg) \leq 1.
    \end{equation}
    \item Next, suppose that updating $v$ leads to the switch of the owner, \ie Line \ref{line:new:round} is triggered. Since we also partition $X$ into two parts and update $X$ with probability proportional to the size of each part, the same argument applies. 
    That is, taking expectation over the random choice of $\rv{b}'$, 
    $\defi(\rv{X}(J_A))$ increases by at most 1 in expectation.
    \item Finally, we further partition $X$ according to \cref{lemma:density:restoring:partition}.
    Say $X$ is partitioned into $X = X^{1} \cup \dots \cup X^{m}$ and let $I_1, \dots, I_m$ be the index sets promised by \cref{lemma:density:restoring:partition}; and for all $j \in [m]$ we have 
    $$
        \defi(\rv{X}^j(J_A \setminus I_j)) \leq \defi(\rv{X}(J_A)) - (1 - \gamma)  \log n |I_j| + \delta_j,
    $$
    where $\delta_j = \log(|X| / \cup_{v \geq j} X^v)$. 
    With probability $p_j \eqdef |X^j| / |X|$, we update $X := X^j$ and $J_A := J_A \setminus I_j$. 
    Therefore, taking expectation over the random choice of $\rv{j}$, the density function will decrease by 
    \begin{equation} \label{equ:decrement:1}
        \defi(\rv{X}(J_A)) - \ex[j \gets \bm{j}]{\defi(\rv{X}^j(J_A \setminus I_j))}
        \geq \ex[j \gets \bm{j}]{(1 - \gamma) \log n \cdot  |I_j| - \delta_j}.
    \end{equation}
    Note that $\delta_j \eqdef \log \frac{1}{\sum_{v \geq j}p_v}$ and thus 
    \begin{equation} \label{equ:decrement:2}
        \ex[j \gets \bm{j}]{\delta_j} = \sum_{j = 1}^m p_j  \log \frac{1}{\sum_{v \geq j}p_j}
        \leq \int_{0}^1 \log \frac{1}{1 - x} \mathrm{d} x \leq 1.
    \end{equation}
\end{enumerate}

Let $\fFam_{t - 1}$ be the $\sigma$-algebra generated by the random coins used for the first $(t - 1)$ iterations.
Let $\rv{\beta}_t$ be the increment of $|\ol{J_A}|$ and $|\ol{J_B}|$ in the $t$-th iteration. 
Observe that $\rv{\beta}_t = |I_{\rv{j}}|$ by definition. 
By \cref{equ:decrement:1} and \cref{equ:decrement:2}, 
taking expectation over random choice of $\rv{j}$, 
$\defi(\rv{X}(J_A))$ decrease by at least $ (1 - \gamma)\log n \cdot  \ex{\rv{\beta}_{t} \mid \fFam_{t - 1}}  - 1$ due to the density restoring partition.
Then 
\begin{equation}
    \label{equ:increment}
    \ex{\rv{\phi}_t - \rv{\phi}_{t - 1}} 
    = \ex{ \ex{\rv{\phi}_t - \rv{\phi}_{t - 1} \mid \fFam_{t - 1}}} 
    \leq \ex{1 + \rv{\eta_t} - ((1 - \gamma)\log n \cdot \rv{\beta}_t- 1)},
\end{equation}
where $\rv{\eta_t} \eqdef \one[\text{owner switches in the $t$-th iteration}]$.

Write $c \eqdef \CC(\Pi)$ and assume we always have $c$ iterations. \footnote{Namely, $\Pi$ communicates $c$ bits on all inputs.}
In the beginning, $\rv{\phi}_0 = \defi([n]^n \times [n]^n) = 0$.
Since the density function is always non-negative by definition, we have $\rv{\phi}_{c} \geq 0$ and thus $\ex{\rv{\phi}_{c} - \rv{\phi}_{0}} \geq 0.$
On the other hand, by telescoping, 
$$
    \ex{\rv{\phi}_{c} - \rv{\phi}_{0}} = \sum_{t = 1}^c \ex{\rv{\phi}_{t} - \rv{\phi}_{t - 1}}
    \leq 2c + \sum_{t = 1}^c \ex{\rv{\eta}_t - (1 - \gamma)\log n \cdot\rv{\beta_t}},
$$
where the inequality follows from \cref{equ:increment}.
Observe that 
$\sum_{t = 1}^c\rv{\eta}_t$ is at most $k$ and 
$\sum_{t = 1}^c\rv{\beta}_t = |\ol{\rv{J}_A}| + |\ol{\rv{J}_B}|$ by definition. 
We conclude that 
$$
     \ex{|\ol{\rv{J}_A}| + |\ol{\rv{J}_B}|} = \ex{\sum_{t = 1}^c \rv{\beta}_t} \leq  \frac{2c + k}{(1 - \gamma)\log n} \leq \frac{3c}{(1 - \gamma)\log n},
$$
as desired.
\end{proof}

\paragraph{Proving the main theorem.} Now our main theorem easily follows from the two lemmas.
\begin{proof}[Proof of \cref{thm:main:distributional}]
    Set $\gamma \eqdef 1 - \frac{0.1}{\log n}$.
    By \cref{lem:fixed:size} and \cref{lem:accuracy}, we get 
    \begin{align*}
        \mathtt{Accuracy}(\Pi) \eqdef \pr[f_A, f_B \gets {[n]^n}]{\Pi(f_A, f_B) = \chase_k(f_A, f_B)}
        &\leq \frac{n^{1 - \gamma}}{2} +  n^{-\gamma} \cdot (k - 1) \cdot  \frac{3\CC(\Pi)}{(1 - \gamma)\log n} \\
        &\leq 0.54 + \frac{1.08 (k - 1)}{n} \cdot 30\CC(\Pi),
    \end{align*}
    where we use 
    $
        \frac{n^{1 - \gamma}}{2} \leq 0.54, n^{-\gamma} \leq \frac{1.08}{n}.
    $
    Since we assumed $\mathtt{Accuracy}(\Pi) \geq 2/3$, we conclude that 
    $$
        \CC(\Pi) \geq \frac{2/3 - 0.54}{1.08 \cdot 30} \cdot \frac{n}{k - 1} > 0.0039 \cdot \frac{n}{k - 1} = \Omega(n / k).
    $$
    We also trivially have $\CC(\Pi) \geq k - 1$ as $\Pi$ has $(k - 1)$ rounds; putting it together we conclude that $\CC(\Pi) = \Omega(n / k + k)$.
\end{proof}

\section*{Acknowledgements}
We thank Sepehr Assadi and Yuval Filmus for their helpful comments.

\bibliographystyle{alpha}
\bibliography{reference.bib}
\newpage
\section*{Appendix}
The following lemma and proof are from Lemma 5 in \cite{BPPLifting}.
\begin{lemma}[\cref{lemma:density:restoring:partition} restated]
Let $\gamma \in (0, 1)$.
Let $X$ be a subset of $[n]^M$ and $J \subseteq [M]$.
Suppose that there exists an $\beta \in [n]^{\ol{J}}$ such that $\forall x\in X, x(\ol{J})=\beta$. 
Then, there exists a partition 
$X = X^1\cup X^2\cup\cdots\cup X^r$
and every $X^i$ is associated with a set $I_i \subseteq J$ and a value $\alpha_i \in \zo^{I_i}$ that satisfy the following properties.
\begin{enumerate}
    \item $\forall x\in X^i,x(I_i)= \alpha_i$;
    \item $\rv{X}^i(J\setminus I_i)$ is $\gamma$-dense;
    \item $\defi\left(\rv{X}^i(J \setminus I_i)\right)\leq \defi\left(\rv{X}(J)\right)- (1 - \gamma)\log n \cdot |I_i| + \delta_i$, where $\delta_i \eqdef \log(|X|/|\cup_{j\geq i}X^j|)$. 
\end{enumerate} 
\end{lemma}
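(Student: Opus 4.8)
\textbf{The peeling process.} The plan is to build the partition greedily and verify the three properties by elementary counting, using throughout that every random variable in sight is flat (uniform on its support, since $\rv{X}$ is uniform on $X$). Set $X^{\geq 1} := X$. Given a nonempty leftover set $X^{\geq i}$, test whether $\rv{X^{\geq i}}(J)$ is $\gamma$-dense. If it is, put $X^i := X^{\geq i}$, $I_i := \emptyset$, let $\alpha_i$ be the empty tuple, and halt. Otherwise density fails, so there is a nonempty $I \subseteq J$ and $\alpha \in [n]^I$ with $\Pr[\rv{X^{\geq i}}(I) = \alpha] > n^{-\gamma|I|}$; call such a pair \emph{violating}. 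Among all violating pairs pick one with $|I|$ \emph{as large as possible}, set $I_i := I$, $\alpha_i := \alpha$, $X^i := \{x \in X^{\geq i} : x(I_i) = \alpha_i\}$, and recurse on $X^{\geq i+1} := X^{\geq i} \setminus X^i$. Each $X^i$ is nonempty, so $|X^{\geq i}|$ strictly decreases and the process terminates, giving a partition $X = X^1 \cup \cdots \cup X^r$ with $\bigcup_{j \geq i} X^j = X^{\geq i}$ for all $i$; Property~1 is immediate from the definition of $X^i$.

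\textbf{Flatness identities and Property~3.} Every $x \in X$ agrees on $\ol{J}$ (all equal $\beta$) and every $x \in X^i$ agrees in addition on $I_i$ (all equal $\alpha_i$), so $x \mapsto x(J \setminus I_i)$ is injective on $X^i$; hence $\rv{X}^i(J \setminus I_i)$ is uniform on a set of size $|X^i|$ with $\ent_\infty(\rv{X}^i(J \setminus I_i)) = \log|X^i|$, and likewise $\ent_\infty(\rv{X}(J)) = \log|X|$. Thus $\defi(\rv{X}(J)) = |J|\log n - \log|X|$ and $\defi(\rv{X}^i(J\setminus I_i)) = (|J| - |I_i|)\log n - \log|X^i|$. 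Substituting these and $\delta_i = \log(|X|/|X^{\geq i}|)$ into Property~3 and cancelling, the target inequality becomes simply $|X^i| \geq |X^{\geq i}| \cdot n^{-\gamma|I_i|}$. If $I_i = \emptyset$ then $X^i = X^{\geq i}$ and it holds with equality; otherwise $|X^i| = |X^{\geq i}| \cdot \Pr[\rv{X^{\geq i}}(I_i) = \alpha_i] > |X^{\geq i}| \cdot n^{-\gamma|I_i|}$ by the choice of the violating pair.

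\textbf{Property~2.} If $I_i = \emptyset$, then $\rv{X}^i(J) = \rv{X^{\geq i}}(J)$ is $\gamma$-dense by the stopping test. Otherwise, suppose toward a contradiction that some nonempty $I' \subseteq J \setminus I_i$ and $\alpha' \in [n]^{I'}$ satisfy $\Pr[\rv{X}^i(I') = \alpha'] > n^{-\gamma|I'|}$. Since $X^i$ is exactly the event $\{x(I_i) = \alpha_i\}$ inside $X^{\geq i}$, and $I_i, I'$ are disjoint subsets of $J$, multiplying the two bounds gives
\[
\Pr\bigl[\rv{X^{\geq i}}(I_i \cup I') = \alpha_i \cup \alpha'\bigr] = \Pr\bigl[\rv{X^{\geq i}}(I_i) = \alpha_i\bigr]\cdot\Pr\bigl[\rv{X}^i(I') = \alpha'\bigr] > n^{-\gamma|I_i|}\cdot n^{-\gamma|I'|} = n^{-\gamma|I_i \cup I'|},
\]
so $(I_i \cup I', \alpha_i \cup \alpha')$ is a violating pair of size $|I_i| + |I'| > |I_i|$, contradicting the maximality in the choice of $I_i$. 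Hence no such $I'$ exists and $\rv{X}^i(J \setminus I_i)$ is $\gamma$-dense.

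The only step that is not bookkeeping is the selection rule: it is precisely the maximality of $|I_i|$ among violating pairs that forces the residual coordinates $J \setminus I_i$ to be dense inside $X^i$, which is what drives Property~2; the flatness identities and the arithmetic behind Property~3 are then routine.
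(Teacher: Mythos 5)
Your proof is correct and follows essentially the same route as the paper's: the same greedy peeling of a maximal density-violating set, the same product-of-probabilities contradiction for Property~2, and the same flatness/counting identities for Property~3. The only differences are cosmetic (you pick a maximum-cardinality violating set rather than an inclusion-maximal one, and you treat the $I_i = \emptyset$ case explicitly), neither of which changes the argument.
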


\begin{proof}
We prove it by a greedy algorithm as follows.

\begin{algorithm}[ht]\label{greedy}
  \KwIn{$X\subseteq [n]^M$}
  \KwOut{A partition $X = X^1\cup X^2\cup\cdots\cup X^m$}
  Initialize $i := 1$.
  
  \While{$X \neq \emptyset$}
  {
  Let  $I \subseteq J$ be a maximal subset (possibly $I = \emptyset$) such that $\ent_\infty(\rv{X}(I)) < \gamma|I| \log n$ and let $\alpha_i \in [n]^I$ be a witness of this fact, \ie  $\Pr[\rv{X}(I) = \alpha_i] > n^{-\gamma |I|}$.\\
  $X^i := \{x \in X : x(I) = \alpha_i\}$ and $I_i := I$. \\
  Update $X := X \setminus X^i$, $J := J \setminus I_i$, and $i := i+1$.
  }
  \caption{Greedy Algorithm}
\label{algo:decompose}
\end{algorithm}

Item 1 is guaranteed by the construction of $X^i$ and $I_i$.

We prove Item 2 by contradiction.
Assume towards contradiction that $\rv{X}^i(J\setminus I_i)$ is not $\gamma$-dense for some $i$.
By definition, there is a nonempty set \(K \subseteq J \setminus I_i\) and \(\beta \in [n]^K\) violating the min-entropy condition, namely, 
$
    \pr{\rv{X}(K) = \beta} >  n^{-\gamma |K|}.
$
Write $X^{\geq i} \eqdef \cup_{j \geq i}X^i$. 
Then 
$$
    \pr{\rv{X}^{\geq i}(I_i \cup K) = (\alpha_i, \beta)} 
    =  \pr{\rv{X}^{\geq i}(I_i) = \alpha_i} \cdot  \pr{\rv{X}^{i}(K) = \beta} > n^{-\gamma|I_i|} \cdot n^{- \gamma|K|} = n^{-\gamma|I_i \cup K|},
$$
where the first equality holds as $(\rv{X}^{\geq i} | \rv{X}^{\geq i}(I_i) = \alpha_i) = \rv{X}^i$.
However, this means at moment that $I_i$ is chosen, 
the set \(I_i \cup K \subseteq J\) also violates the min-entropy condition (witnessed by \((\alpha_i, \beta)\)), contradicting the maximality of \(I_i\).

Finally, Item 3 is proved by straightforward calculation:
\begin{equation*}
\begin{aligned}
\defi(\rv{X}^i(J \setminus I_i)) 
&= |J \setminus I_i| \log n - \log |X^i|\\
&\leq (|J| \log n - |I_i| \log n) - \log \left( \size{X^{\geq i}} \cdot n^{-\gamma |I_i|} \right)\\
&= (|J| \log n - \log |X| ) - (1 - \gamma)|I_i| \cdot \log n + \log \left( \frac{|X|}{\size{X^{\geq i}}} \right) \\
&= \defi\Big(\boldsymbol{X}(J)\Big)-(1-\gamma)|I_i|\log n +\delta_i.
\end{aligned}
\end{equation*}

\end{proof}

\end{document}